\DeclareMathAlphabet\mathbfcal{OMS}{cmsy}{b}{n}
\newcommand*{\genbf}[1]{\ifmmode\mathbf{#1}\else\textbf{#1}\fi}
\newcommand{\norm}[1]{\left\lVert#1\right\rVert}
\newcommand{\Phix}[2]{\bm{\Phi}^{\mathbf{x}}[\mathbf{#1},\mathbf{#2}]}
\newcommand{\Phiu}[2]{\bm{\Phi}^{\mathbf{u}}[\mathbf{#1},\mathbf{#2}]}
\newcommand{\Psix}[0]{\bm{\Psi}^{\mathbf{x}}}
\newcommand{\Psiu}[0]{\bm{\Psi}^{\mathbf{u}}}
\newcommand{\Phixno}[0]{\bm{\Upsilon}^{\mathbf{x}}}
\newcommand{\Phiuno}[0]{\bm{\Upsilon}^{\mathbf{u}}}
\newcommand{\Emme}[0]{\mathbfcal{M}}
\newcommand{\Effe}[0]{\mathbfcal{F}}
\newcommand{\LL}{{\mathcal{L}}}
\newcommand{\xb}[0]{\mathbf{x}}
\newcommand{\ub}[0]{\mathbf{u}}
\newcommand{\Fb}[0]{\mathbf{F}}
\newcommand{\wb}[0]{\mathbf{w}}
\newcommand{\hatbf}[1]{\widehat{\mathbf{#1}}}
\newcommand{\bmx}{\bold{x}}
\newcommand{\bmu}{\bold{u}}
\newcommand{\bmw}{\bold{w}}
\newcommand{\bmF}{\bold{F}}
\newcommand{\bmK}{\bold{K}}
\newcommand{\bmC}{\bold{C}}
\newcommand{\bmcalM}{\boldsymbol{\mathcal{M}}}
\newcommand{\cG}{\mathcal{G}}
\newcommand{\cV}{\mathcal{V}}
\newcommand{\cE}{\mathcal{E}}
\newcommand\@erelb@r[1]{%
  \mathrel{\tikz[baseline=-.5ex]\draw[#1] (0,0)---(0.3,0);}
}
\newcommand{\erelbar}[1]{\@erelbar#1}
\def\@erelbar#1#2{%
  \ifcase\numexpr#1*4+#2\relax
    \@erelb@r{-}\or     
    \@erelb@r{->}\or    
    \@erelb@r{-|}\or    
    \@erelb@r{->|}\or   
    \@erelb@r{<-}\or    
    \@erelb@r{<->}\or   
    \@erelb@r{<-|}\or   
    \@erelb@r{<->}\or   
    \@erelb@r{|-}\or    
    \@erelb@r{|->}\or   
    \@erelb@r{|-|}\or   
    \@erelb@r{|<->|}\or 
    \@erelb@r{|<-}\or   
    \@erelb@r{|<->}\or  
    \@erelb@r{|<-|}\or  
    \@erelb@r{|<->|}    
  \else
    \@wrong
  \fi
}
\newtheoremstyle{mystyle}
  {}
  {}
  {}
  {}
  {\bfseries}
  {.}
  { }
  {}
\theoremstyle{mystyle}
\newtheorem{proposition}{Proposition}
\newtheorem{theorem}{Theorem}
\newtheorem{corollary}{Corollary}
\newtheorem{remark}{Remark}
\newtheorem{definition}{Definition}
\newtheorem{problem}{Problem}
\title{  \textbf{ Learning to Boost the Performance \\ of Stable Nonlinear Systems}} 
\begin{document}


\author{Luca Furieri, Clara Luc\'{i}a Galimberti, and Giancarlo Ferrari-Trecate
\thanks{ L. Furieri, C. L. Galimberti, and G. Ferrari-Trecate are with the Institute of Mechanical Engineering, EPFL, Switzerland. E-mail addresses: \{luca.furieri, clara.galimberti, giancarlo.ferraritrecate\}@epfl.ch.}
\thanks{Research supported by the Swiss National Science Foundation (SNSF) under the NCCR Automation (grant agreement 51NF40\textunderscore 80545). Luca Furieri is also grateful to the SNSF for the Ambizione grant PZ00P2\textunderscore208951.}
}
\maketitle

\begin{abstract}
The growing scale and complexity of safety-critical control systems underscore the need to evolve current control architectures aiming for the unparalleled performances achievable through state-of-the-art optimization and machine learning algorithms. However, maintaining closed-loop stability while boosting the performance of nonlinear control systems using data-driven and deep-learning approaches stands as an important unsolved challenge. In this paper, we tackle the performance-boosting problem with closed-loop stability guarantees. Specifically, we establish a synergy between the Internal Model Control (IMC) principle for nonlinear systems and state-of-the-art unconstrained optimization approaches for learning stable dynamics. Our methods enable learning over arbitrarily deep neural network classes of performance-boosting controllers for stable nonlinear systems; crucially, we guarantee $\mathcal{L}_p$ closed-loop stability even if optimization is halted prematurely, and even when the ground-truth dynamics are unknown, with vanishing conservatism in the class of stabilizing policies as the model uncertainty is reduced to zero. 
We discuss the implementation details of the proposed control schemes, including distributed ones, along with the corresponding optimization procedures, demonstrating the potential of freely shaping the cost functions through several numerical experiments.
\end{abstract}

\begin{IEEEkeywords} Optimal control, Closed-loop stability, Learning for control, Internal model control, Uncertain systems, Distributed control
\end{IEEEkeywords}

\maketitle

\section{Introduction}
\label{sec:introduction}

The success of control systems across a broad spectrum of applications --- from manufacturing to water, power, and transportation networks \cite{annaswamy2023control} --- is rooted not only in advancements in sensing, computation, and communication but also in the growing availability of methods for designing model-based controllers capable of stabilizing nonlinear systems at nominal operating conditions.

However, in many applications, merely stabilizing the closed-loop system is not sufficient; achieving satisfactory performance is also crucial, often necessitating the integration of additional control loops. In Nonlinear Optimal Control (NOC), performance requirements are typically encoded in the shape of the cost function that the control policy strives to minimize. Consequently, it is beneficial to develop NOC algorithms that accommodate general nonlinear costs to enable sophisticated closed-loop behaviors, such as collision avoidance or waypoint tracking in swarms of robots.


In this paper, we tackle the following performance-boosting problem: given a discrete-time nonlinear system that is stable or has been pre-stabilized using a base controller, how can we enhance its performance during the transient --- that is, before the system settles into a steady state --- by employing general cost functions without compromising stability?

A first approach to designing performance-boosting regulators involves resorting to NOC methods with stability guarantees. Despite extensive research in this area \cite{sastry2013nonlinear}, the problem is fully understood only when the system dynamics are linear and the cost admits a convex reformulation. For nonlinear systems, traditional methods for addressing NOC include dynamic programming and the maximum principle \cite{bertsekas2011dynamic, pontryagin2018mathematical}. However, the computation of NOC policies through these methods often faces significant computational challenges \cite{pontryagin2018mathematical}. Furthermore, to ensure stability, stringent limitations must be imposed on the class of costs that can be utilized. An alternative approach to tackling performance-boosting is offered by receding-horizon control schemes, such as Nonlinear Model Predictive Control (NMPC) \cite{rawlings2017model}. These controllers are based on real-time optimization; a finite-horizon NOC problem is solved at each time instant to determine the control input. However, a significant limitation of NMPC is that the control policy can seldom be precomputed and stored in an explicit form, which makes NMPC inapplicable when the control platform lacks the computational resources necessary to solve mathematical programs in real-time. Moreover, similar to NOC, ensuring stability requires imposing strong limitations on the class of admissible cost functions \cite{rawlings2017model}.

More recently, Reinforcement Learning (RL) and Deep Neural Networks (DNNs) have emerged as powerful tools that enable agents to understand and optimally interact with complex environments and dynamical systems, e.g., \cite{sutton2018reinforcement, brunke2021safe}. 
Many RL approaches are based on minimizing arbitrary cost functions, calling for the use of broad sets of candidate nonlinear control policies. To this end, RL methods often employ families of policies that incorporate deep Neural Networks (NNs), due to their ability to model rich classes of nonlinear functions.
These capabilities have led to remarkable applications, such as four-legged robots navigating challenging terrains \cite{lee2020learning} and drones that can outperform humans in races \cite{song2021autonomous, kaufmann2023champion}. On the other hand, general methodologies for designing RL policies for nonlinear dynamical systems, while ensuring closed-loop stability, are currently scarce and may be limited by strong assumptions \cite{berkenkamp2018safe, zanon2020safe, jin2020stability}. 
As a result, so far the applicability of RL approaches has been mainly limited to systems that are not safety-critical.

Independent of their application in RL, NNs have been employed in model-based control since the 1990s for approximating nonlinear receding horizon policies \cite{parisiniRecedinghorizonRegulatorNonlinear1995,parisiniNonlinearStabilizationRecedinghorizon1998} or synthesizing nonlinear regulators from scratch \cite{levinControlNonlinearDynamical1996}.
Recent results on the design of provably stabilizing DNN control policies fall into two categories. The first one comprises constrained optimization approaches \cite{berkenkamp2018safe,gu2021recurrent,pauli2021offset} that ensure global or local stability by enforcing Lyapunov-like inequalities during optimization. However, conservative stability constraints can severely restrict the range of admissible policies or fail to produce a viable controller even when it exists.
Additionally, enforcing constraints such as linear matrix inequalities becomes a computational bottleneck in large-scale applications.

The second category embraces unconstrained optimization approaches, aiming to define classes of control policies with built-in stability guarantees \cite{wang2021learning,wang2022youla,furieri2022distributed}. These methods, which are similar to those developed in this paper, allow unconstrained optimization over finitely many parameters --- using, for instance, standard gradient descent techniques --- without sacrificing stability, regardless of the chosen parameter values.
Optimizing over sets of stabilizing policies has two main benefits. First, it completely decouples the stabilization problem from the choice of the cost being optimized. Second, it enables \textcolor{black}{\emph{stability by design}}, that is, the ability to guarantee closed-loop stability even if the policy optimization ends at a local minimum or is prematurely halted. However, these approaches are limited to discrete-time linear systems \cite{wang2021learning,wang2022youla} or to continuous-time systems in the port-Hamiltonian form \cite{furieri2022distributed}. While recent work surpasses the limitations above \cite{furieri2022neural,barbara2023learning}, in real-world applications, the knowledge about the system model is not perfect. The impact of modeling errors on the parametrizations of stable closed-loop maps for nonlinear systems has remained largely unexplored. 

\subsection{Contributions}

This paper explores approaches to solve performance-boosting problems in general discrete-time, time-varying systems. Specifically, we develop unconstrained optimization approaches based on classes of state-feedback policies that induce closed-loop dynamics described by \textcolor{black}{specific classes of} stable and deep NNs.

After formally stating the performance-boosting problem in Section~\ref{sec:problem_statement}, we present our first contribution, which provides a complete characterization of the class of stability-preserving controllers for stable systems. 
This result is presented in Section~\ref{sec:main} and reveals that an Internal Model Control (IMC) structure \cite{garcia1982internal, economou1986internal, bonassi2022recurrent} allows characterizing, without conservatism, the class of \emph{all} stability-preserving controllers, where the only free parameter is an $\mathcal{L}_p$ operator.
Our results hinge on adapting 
nonlinear variants of the Youla parametrization \cite{anantharam1984stabilization, fujimoto1998state, fujimoto2000characterization} to discrete-time systems in state-space with process noise, and revealing their connections with IMC schemes \cite{garcia1982internal, economou1986internal, bonassi2022recurrent} in this setup.

Further, we examine the relationship with the recently proposed nonlinear System Level Synthesis (SLS) framework developed in \cite{ho2020system}. In Section~\ref{sec:robustness}, our main contribution is that the proposed approach is compatible with scenarios where only an approximate system description is available, such as models identified from data or derived from simplified physical principles. 
Specifically, under a finite gain assumption on the model mismatch, stability can always be preserved by embedding a nominal system model and optimizing over nonlinear controllers with a sufficiently reduced gain on the free $\mathcal{L}_p$ parameter. Importantly, the method ensures vanishing conservatism \textcolor{black}{in the class of parametrized stabilizing policies} as the model uncertainty approaches zero. Additionally, by considering networks of interconnected subsystems, we demonstrate how the IMC structure of our controllers naturally lends itself to the development of distributed policies where the communication topology mirrors the subsystem couplings.

Finally, Section~\ref{sec:implementation} bridges the gap between theoretical developments and computations, showing how to use Recurrent Equilibrium Networks (RENs) \cite{kim2018standard, revay2023recurrent} to obtain a finite-dimensional parametrization of performance-boosting controllers that can include DNNs. 
The final part of the paper in Section~\ref{sec:numerical}  presents several simulations by considering coordination problems for mobile robots. Specifically, we show how, similarly to RL, the freedom in specifying the optimization cost allows designing NN controllers that can boost various forms of performance and safety, reaching beyond classical optimal control objectives consisting of the sum of stage-costs over time~\cite{bertsekas2011dynamic}.

This paper builds upon our initial work \cite{furieri2022neural} where we first derived the parametrization of all stabilizing controllers. However, unlike in \cite{furieri2022neural}, the IMC form of stabilizing controllers and the robustness analysis presented here are new. More specifically, the controllers in \cite{furieri2022neural} were based on the nonlinear SLS parametrization introduced in \cite{ho2020system}, while the controllers in this paper rely on a much more intuitive IMC formulation. Additionally, the main technical contributions about robustness with vanishing conservatism included in this paper are novel and not included in \cite{furieri2022neural}. Finally, the distributed control architectures and the majority of simulations presented in this work are not present in \cite{furieri2022neural}.

\smallskip

\subsection{Notation}
\label{sec:notation}
\textbf{Signals and operators:} 
The set of all sequences $\mathbf{x} = (x_0,x_1,x_2,\ldots)$, where $x_t \in \mathbb{R}^n$, $t\in \mathbb{N}$ , is denoted as $\ell^n$. 
Moreover,  $\mathbf{x}$ belongs to $\ell_p^n \subset \ell^n$ with $p \in \mathbb{N} \cup \infty$ if $\norm{\mathbf{x}}_p = \left(\sum_{t=0}^\infty |x_t|^p\right)^{\frac{1}{p}} < \infty$, where $|\cdot|$ denotes any vector norm. 
We say that $\xb \in \ell^n_\infty$ if $\operatorname{sup}_{t}|x_t|< \infty$. 
When clear from the context, we omit the superscript $n$ from $\ell^n$ and $\ell^n_p$. 
An operator $\mathbf{A}$ is said to be $\ell_p$-stable\footnote{We also say that the operator is \textit{stable}, for short, when the value of $p$ is clear from the context.}  if it is \emph{causal} and $\mathbf{A}(\mathbf{w}) \in \ell_{p}^m$ for all $\mathbf{w} \in \ell_{p}^n$. Equivalently, we  write $\mathbf{A} \in \mathcal{L}_{p}$. 
We say that an $\mathcal{L}_p$ operator $\mathbf{A}:\mathbf{w}\mapsto \mathbf{u}$  has finite $\mathcal{L}_p$-gain $\gamma(\mathbf{A})>0$ if  $\|\mathbf{u}\|_p\leq \gamma(\mathbf{A})\|\mathbf{w}\|_p$, for all $\mathbf{w}\in\ell_p^n$.

\textbf{Time-series:} We use the notation $x_{j:i}$ to refer to the truncation of $\mathbf{x}$ to the finite-dimensional vector $(x_i,x_{i+1},\ldots,x_{j})$. An operator $\mathbf{A}:\ell^n \rightarrow \ell^m$ is said to be \emph{causal} if $\mathbf{A}(\mathbf{x}) = (A_0(x_0),A_1(x_{1:0}),\ldots,A_t(x_{t:0}),\ldots)$. If in addition $A_t(x_{t:0}) = A_t(x_{t-1:0},0)$, then $\mathbf{A}$ is said to be strictly causal. Similarly,  we define $A_{j:i}(x_{j:0}) =  (A_i(x_{i:0}),A_{i+1}(x_{i+1:0}),\ldots,A_j(x_{j:0}))$. 
 For a matrix $M \in \mathbb{R}^{m \times n}$, $M\mathbf{x} = (Mx_0,Mx_1,\ldots) \in \ell^m$.

\textbf{Graph theory:} Given an undirected graph~$\cG=(\cV, \cE)$ described by the set of nodes $\cV=\{1,\ldots,N\}$ and the set of edges $\cE\subset \cV \times \cV$, we denote set of neighbors of node~$i$, including~$i$ itself by $\mathcal{N}_i = \{i\} \cup \{j\ |\ \{i,j\}\in\cE\} \subseteq \cV$. 
We denote with col$_{j\in\mathcal{V}}(v^{[j]})$ a vector which consists of the stacked
subvectors $v^{[j]}$ from $j=1$ to $j=N$ and with $v^{[\mathcal{N}_i]}$ a vector composed by the stacked subvectors $v^{[j]}$ of all neighbors of node~$i$,  
i.e., $v^{[\mathcal{N}_i]}=col_{j\in\mathcal{N}_i}(v^{[j]})$.  
For a signal $\mathbf{x} \in \ell^n$, where $x_t=col_{i\in\mathcal{V}}(x^{[i]}_t)$, $x_t^{[i]} \in \mathbb{R}^{n_i}$, and $n=\sum_{i=1}^N n_i$,
we denote with $\mathbf{x}^{[i]} \in \ell^{n_i}$ the sequence $\mathbf{x}^{[i]}= (x_0^{[i]},x_1^{[i]},\ldots)$. Similarly, 
we define 
$\mathbf{x}^{[\mathcal{N}_i]}= (x_0^{[\mathcal{N}_i]},x_1^{[\mathcal{N}_i]},\ldots)$.

\section{The Performance-boosting Problem}
\label{sec:problem_statement}

We consider nonlinear discrete-time time-varying systems
\begin{equation}
    \label{eq:system}
    x_{t} = f_t(x_{t-1:0},u_{t-1:0})+w_t\,,~~~t= 1,2,\ldots\,,
\end{equation}
where $x_t \in \mathbb{R}^n$ is the state vector, $u_t \in \mathbb{R}^m$ is the control input, $w_t \in \mathbb{R}^n$ stands for unknown process noise with $ w_0 =x_0 $, and $f_0=0$. The system model \eqref{eq:system} is very general. For instance, it can describe the dynamics of the error between the state of a nonlinear system and a reference trajectory in $\ell_p$.
In \emph{operator form}, system \eqref{eq:system} is equivalent to
\begin{equation}
\label{eq:operator_form}
    \mathbf{x} = \mathbf{F}(\mathbf{x},\mathbf{u}) + \mathbf{w}\,,\end{equation}
where $\mathbf{F}:\ell^n\times \ell^m \rightarrow \ell^n$ is the strictly causal operator such that $\mathbf{F}(\mathbf{x},\mathbf{u}) = (0,f_1(x_0,u_0),\ldots,f_t(x_{t-1:0},u_{t-1:0}),\ldots)$. 
Note that $\mathbf{w}=(x_0,w_1,\ldots)$ and $\mathbf{u}$ collects all data needed for defining the system evolution over an infinite horizon.
As an example, when the system~\eqref{eq:system} takes the Linear Time Invariant (LTI) form 
\begin{equation}
    \label{eq:linear_system}
    x_t = Ax_{t-1}+Bu_{t-1} + w_t \,,
\end{equation}
the model~\eqref{eq:operator_form} becomes
{
\setlength{\arraycolsep}{2pt}
\begin{equation*}
\begin{bmatrix}
x_0 \\  x_1 \\ x_2 \\ \vdots 
\end{bmatrix}
=
\begin{bmatrix}
0 & 0 & 0 & \cdots \\
A & 0 & 0 & \cdots \\
0 & A & 0 & \cdots \\
\vdots & \vdots & \vdots & \ddots
\end{bmatrix}
\begin{bmatrix}
x_0 \\  x_1 \\ x_2 \\ \vdots 
\end{bmatrix} + 
\begin{bmatrix}
0 & 0 & 0 & \cdots \\
B & 0 & 0 & \cdots \\
0 & B & 0 & \cdots \\
\vdots & \vdots & \vdots & \ddots
\end{bmatrix}
\begin{bmatrix}
u_0 \\ u_1 \\ u_2 \\ \vdots 
\end{bmatrix} + 
\begin{bmatrix}
x_0 \\  w_1 \\ w_2 \\ \vdots 
\end{bmatrix}\,.
\end{equation*}
}
We consider disturbances with support $\mathcal{W}_t\subseteq \mathbb{R}^n$ following a random vector distribution $\mathcal{D}_t$, that is, $w_t \in \mathcal{W}_t$ and $w_t \sim \mathcal{D}_t$ for every $t=0,1,\ldots$.
In order to control the behavior of system~\eqref{eq:system}, we consider nonlinear, state-feedback, time-varying control policies
\begin{equation}
    \label{eq:control}
    \mathbf{u} = \mathbf{K}(\mathbf{x}) = (K_0(x_0),K_1(x_{1:0}),\ldots, K_t(x_{t:0}),\ldots)\,,
\end{equation}
where $\mathbf{K}:\ell^n\ \rightarrow \ell^m$  is a \emph{causal} operator to be designed.  Note that the controller $\mathbf{K}$ can be dynamic, as $K_t$ can depend on the whole past history of the system state.  Since for each 
$\mathbf{w} \in \ell^n$  and $\mathbf{u} \in \ell^m$ 
the system~\eqref{eq:system} produces a unique state sequence $\mathbf{x} \in \ell^n$, equation \eqref{eq:operator_form} defines a unique transition operator
\begin{equation*}
    \Effe:(\mathbf{u},\mathbf{w})\mapsto \mathbf{x}\,,
\end{equation*} 
which provides an input-to-state model of system~\eqref{eq:system}. 
Similarly, for each $\mathbf{w} \in \ell^n$ the closed-loop system \eqref{eq:system}-\eqref{eq:control} produces unique trajectories. 
Hence, the closed-loop mapping $\mathbf{w} \mapsto(\mathbf{x},\mathbf{u})$ is well-defined. 
Specifically, for a system $\mathbf{F}$ and a controller $\mathbf{K}$, we denote the corresponding induced closed-loop operators $\mathbf{w} \mapsto\mathbf{x}$ and $\mathbf{w} \mapsto\mathbf{u}$ as $\Phix{F}{K}$ and $\Phiu{F}{K}$, respectively. 
Therefore, we have $\mathbf{x} = \Phix{F}{K}(\mathbf{w})$ and $\mathbf{u} = \Phiu{F}{K}(\mathbf{w})$ for all $\mathbf{w} \in \ell^n$. 
\begin{definition}
The closed-loop system \eqref{eq:system}-\eqref{eq:control} is $\ell_p$-stable if  $\Phiu{F}{K}$ and $\Phiu{F}{K}$ are in $\mathcal{L}_{p}$. 
\end{definition}

Our goal is to synthesize a control policy $\mathbf{K}$ solving the following problem.
\begin{problem}[Performance boosting]
\label{prob:boosting}
	Assume that $\Effe$ lies in $\mathcal{L}_p$. Find $\mathbf{K}$ solving the finite-horizon Nonlinear Optimal Control (NOC) problem

 \begin{subequations}
    \label{NOC:cost_and_stab}
	\begin{alignat}{3}
	&\min_{\mathbf{K}(\cdot)}&& \qquad \mathbb{E}_{w_{T:0}}\left[L(x_{T:0},u_{T:0})\right] \label{NOC:cost}\\
	&\operatorname{s.t.}~~ && x_{t} = f_t(x_{t-1:0},u_{t-1:0})+w_t\,, ~~w_0 = x_0\,, \nonumber\\
	&~~&&u_t = K_t(x_{t:0})\,,~~\forall t =0,1,\ldots\,,\nonumber \\
	&~~&&(\Phix{F}{K},\Phiu{F}{K}) \in \mathcal{L}_{p}\, \,, \label{NOC:stab} 
	\end{alignat}
 \end{subequations}
where $L(\cdot)$ defines any \textcolor{black}{piecewise differentiable lower bounded} loss over realized trajectories $x_{T:0}$ and $u_{T:0}$, and the expectation $\mathbb{E}_{w_{T:0}}[\cdot]$ removes the effect of disturbances $w_{T:0}$ on the realized values of the loss.\footnote{Another common choice is to use $\max_{w_{T:0} \in \mathcal{W}_{T:0}}[\cdot]$ instead of the expectation. Other useful choices  include $\operatorname{Var}_{w_{T:0}}[\cdot]$, $\operatorname{CVAR}_{w_{T:0}}[\cdot]$, and weighted combinations of all the above. In practice, one can approximate the chosen operator that removes the effect of disturbances from the cost by performing multiple experiments.} 
\end{problem}
The main feature of \eqref{NOC:cost_and_stab} is that the cost is optimized over the finite horizon $0,\ldots,T$, but under the strict requirement that the closed-loop system is stable when it evolves over $0,\ldots,+\infty$. In other words, the feedback controller must preserve stability of $\Effe$, and its role is to boost the performance of the system in the transient $0,\ldots, T$. As it will be clear in the sequel, \textcolor{black}{we consider iterative control design algorithms based on gradient descent that exclusively search within sets of controllers that are stability-preserving by design}. This guarantees closed-loop stability during the optimization of the policy parameters.
Note also that,  as it is standard in NOC, we do not expect gradient descent to find the globally optimal solution for any initialization — this is generally impossible for problems beyond Linear Quadratic Gaussian (LQG) control, which enjoy convexity of the cost and linearity of the optimal policies \cite{tang2021analysis,furieri2020first}. Furthermore, the expected value in  \eqref{NOC:cost} can seldom be computed\footnote{For instance because it is too costly or the distribution $\mathcal{D}$ is unknown.} and is approximated by using samples of $w_{T:0}$. \textcolor{black}{Our} design guarantees that, 
in spite of all these limitations, closed-loop stability is never lost.


\section{Parametrization of all Stability-preserving Controllers}
\label{sec:main}

\textcolor{black}{We show how to parametrize all and only the stability-preserving policies by using an IMC control architecture \cite{garcia1982internal,economou1986internal},} depending on an operator $\Emme$ that can be freely chosen in $\mathcal{L}_p$.
Specifically, the block diagram of the proposed control architecture is represented in Figure~\ref{fig:IMCscheme} and it includes a copy of the system dynamics, which is used for computing the estimate $\widehat \bmw$ of the disturbance $\bmw$.  \textcolor{black}{ A key advantage of the proposed IMC parametrization is its compatibility with recently proposed neural network dynamical system models such as those described in \cite{revay2023recurrent, kim2018standard}. As we discuss in Section~\ref{sec:implementation}, these models enable the learning of performance-boosting stabilizing controllers by optimizing a set of free parameters $\theta \in \mathbb{R}^d$, for instance, through simple gradient descent.} 
\begin{figure}
	\centering
    \includegraphics[width=0.99\linewidth]{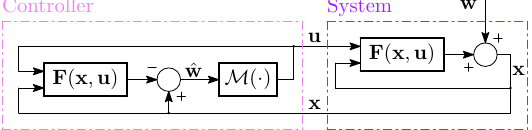}
	\caption{IMC architecture parametrizing of all stabilizing controllers in terms of one freely chosen operator $\Emme \in \mathcal{L}_p$.}
	\label{fig:IMCscheme}
\end{figure}
We are now in a position to introduce the main result.
\begin{theorem}
	\label{th:result_IMC} 
	Assume that the operator $\Effe$ is $\ell_p$-stable, i.e. $\bmx\in\ell_p$ if $(\bmw,\bmu)\in\ell_p$, and consider the evolution of \eqref{eq:operator_form} where $\mathbf{u}$ is chosen as 
 \begin{equation}
 \label{eq:input_M}
     \mathbf{u}=\Emme(\mathbf{x}-\mathbf{F}(\mathbf{x},\mathbf{u}))\,,
 \end{equation} 
 for a causal operator $\Emme:\ell^n \rightarrow \ell^m$. 
 Let $\mathbf{K}$ be the operator such that $\mathbf{u}=\mathbf{K}(\mathbf{x})$ is equivalent to \eqref{eq:input_M}.%
 \footnote{This operator always exists because $\mathbf{F}(\mathbf{x},\mathbf{u})$ is strictly causal. Hence $u_t$ depends on the inputs $u_{t-1:0}$ and can be computed recursively from past inputs and $x_{t:0}$ --- see formula \eqref{eq:controller_IMC}.}
 The following two statements hold true.
	\begin{enumerate}
		\item If $\Emme \in \mathcal{L}_{p}$, then the closed-loop system is $\ell_p$-stable.
		\item If there is  a causal policy $\bmC$ such that  $\Phix{\Fb}{\bmC},~\Phiu{\Fb}{\bmC} \in \mathcal{L}_p$, then 
			\begin{equation}
		\label{eq:choice_Emme}
		\Emme=\Phiu{\Fb}{\bmC}\,, 
		\end{equation}
		 gives $\bmK=\bmC$. 
	\end{enumerate}
\end{theorem}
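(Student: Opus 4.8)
The plan is to establish the two statements separately, exploiting the structure of the IMC loop \eqref{eq:input_M} and the fact that the signal $\mathbf{x}-\mathbf{F}(\mathbf{x},\mathbf{u})$ fed into $\Emme$ is exactly the disturbance $\mathbf{w}$. First I would observe the key algebraic identity: along any trajectory of the closed loop, \eqref{eq:operator_form} gives $\mathbf{x}-\mathbf{F}(\mathbf{x},\mathbf{u})=\mathbf{w}$, so \eqref{eq:input_M} collapses to $\mathbf{u}=\Emme(\mathbf{w})$. This is the heart of the IMC trick — the internal model cancels the plant so that the controller sees the pure disturbance. For statement (1), assume $\Emme\in\mathcal{L}_p$ and take $\mathbf{w}\in\ell_p$. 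Then $\mathbf{u}=\Emme(\mathbf{w})\in\ell_p$ since $\Emme$ is $\ell_p$-stable, which shows $\Phiu{\Fb}{\bmK}=\Emme\in\mathcal{L}_p$. Feeding $\mathbf{u}\in\ell_p$ and $\mathbf{w}\in\ell_p$ into the plant operator $\Effe$ and using the hypothesis that $\Effe\in\mathcal{L}_p$ yields $\mathbf{x}=\Effe(\mathbf{u},\mathbf{w})\in\ell_p$; hence $\Phix{\Fb}{\bmK}\in\mathcal{L}_p$ as well. I would also need to note causality of $\mathbf{K}$ — guaranteed by the footnote argument, since strict causality of $\mathbf{F}$ lets $u_t$ be solved recursively from $x_{t:0}$ and $u_{t-1:0}$ — so that $\Phix{\Fb}{\bmK},\Phiu{\Fb}{\bmK}$ are genuinely in $\mathcal{L}_p$ and the closed loop is well posed.

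For statement (2), I would substitute the choice $\Emme=\Phiu{\Fb}{\bmC}$ into \eqref{eq:input_M} and verify that the unique trajectory produced coincides with the one generated by $\bmC$. Running the closed loop $\eqref{eq:system}$ with $\bmC$ from disturbance $\mathbf{w}$ produces $(\mathbf{x}_C,\mathbf{u}_C)$ with $\mathbf{x}_C=\Phix{\Fb}{\bmC}(\mathbf{w})$, $\mathbf{u}_C=\Phiu{\Fb}{\bmC}(\mathbf{w})$, and by definition of these closed-loop maps $\mathbf{x}_C=\mathbf{F}(\mathbf{x}_C,\mathbf{u}_C)+\mathbf{w}$, i.e. $\mathbf{x}_C-\mathbf{F}(\mathbf{x}_C,\mathbf{u}_C)=\mathbf{w}$. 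Then $\Emme(\mathbf{x}_C-\mathbf{F}(\mathbf{x}_C,\mathbf{u}_C))=\Phiu{\Fb}{\bmC}(\mathbf{w})=\mathbf{u}_C$, so $(\mathbf{x}_C,\mathbf{u}_C)$ satisfies both \eqref{eq:operator_form} and \eqref{eq:input_M}. By uniqueness of closed-loop trajectories (noted in the problem setup — system \eqref{eq:system} with any causal feedback produces unique trajectories), this is \emph{the} trajectory of the IMC loop for every $\mathbf{w}$, so $\mathbf{u}=\mathbf{K}(\mathbf{x})$ must agree with $\mathbf{u}_C=\mathbf{C}(\mathbf{x}_C)$ on all realizable states; since every state sequence is realizable for some $\mathbf{w}$, we get $\mathbf{K}=\mathbf{C}$.

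The main obstacle I anticipate is not the algebra but the well-posedness and uniqueness bookkeeping: one must argue carefully that \eqref{eq:input_M} admits a unique solution $\mathbf{u}$ given $\mathbf{x}$ (and hence defines a bona fide causal operator $\mathbf{K}$), and that the interconnection of this $\mathbf{K}$ with the plant has unique trajectories so that the identification with $\bmC$ in part (2) is legitimate. The recursive solvability rests entirely on strict causality of $\mathbf{F}$ — at time $t$, $f_t$ depends only on $x_{t-1:0},u_{t-1:0}$, so \eqref{eq:input_M} reads $u_t=M_t(x_{t:0}-F_{t:0}(\cdot))$ with the argument already determined by $x_{t:0}$ and $u_{t-1:0}$, giving an explicit recursion; I would state this cleanly (mirroring \eqref{eq:controller_IMC}) rather than belabor it. A secondary subtlety worth a sentence is that in part (2) one should check the internal model is initialized consistently ($\widehat{w}_0 = x_0 = w_0$), which holds because $f_0=0$ in both the true plant and its copy.
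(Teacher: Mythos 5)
Your proposal is correct, and for part (1) it is essentially the paper's argument: the internal model opens the loop, $\hatbf{w}=\mathbf{x}-\mathbf{F}(\mathbf{x},\mathbf{u})=\mathbf{w}$, hence $\Phiu{\Fb}{\bmK}=\Emme$, and $\ell_p$-stability of $\mathbf{w}\mapsto\mathbf{x}$ follows by composing $\mathbf{w}\mapsto(\Emme(\mathbf{w}),\mathbf{w})$ with $\Effe$. For part (2) you take a mildly different route: the paper proves $\Phix{\Fb}{\bmK}=\Phix{\Fb}{\bmC}$ by an explicit induction over time (base case $\Upsilon^x_0=\Psi^x_0=I$ from $f_0=0$ and $w_0=x_0$, then propagating the equality of the input maps through the dynamics), whereas you verify that the $\bmC$-trajectory $(\mathbf{x}_C,\mathbf{u}_C)$ satisfies both the plant equation and the IMC equation \eqref{eq:input_M} and then invoke uniqueness of closed-loop trajectories. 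The two are equivalent in substance, since uniqueness of trajectories is itself established by the same recursion over $t$ that drives the paper's induction; your version simply delegates the bookkeeping to the well-posedness statement already made in Section~\ref{sec:problem_statement}, which is a legitimate economy. The one step you should spell out is your closing claim that every state sequence is realizable under $\bmC$ for some disturbance: given any $\mathbf{x}\in\ell^n$, set $\mathbf{u}=\bmC(\mathbf{x})$ and $\mathbf{w}=\mathbf{x}-\mathbf{F}(\mathbf{x},\mathbf{u})$, and uniqueness identifies $\mathbf{x}$ as the resulting closed-loop state; with that line added, $\mathbf{K}(\mathbf{x})=\bmC(\mathbf{x})$ for all $\mathbf{x}$, i.e., $\mathbf{K}=\bmC$ as operators (a step the paper's proof also leaves largely implicit).
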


\begin{proof}
	We prove $1)$.  For compactness, define $\hatbf{w}=\mathbf{x}-\mathbf{F}(\mathbf{x},\mathbf{u})$. 
    As highlighted in \cite{economou1986internal}, since there is no model mismatch between the plant $\Effe$ and the model $\mathbf{F}$ used to define $\hatbf{w}$, one has $\hatbf \bmw=\bmw$, hence opening the loop. More specifically, from Figure~\ref{fig:IMCscheme} and Equation~\eqref{eq:operator_form} one has
	\begin{equation}\label{eq:tweqw}
	   \hatbf \bmw = -\bmF(\bmx,\bmu)+ \bmF(\bmx,\bmu)+\bmw=\bmw\,.
	\end{equation}
	Therefore, by definition of the closed-loop maps,  one has $\Phiu{\Fb}{ \bmK}=\Emme$ and $\Phix{\Fb}{\bmK}(\bmw)=\bmF(\bmx,\Emme(\bmw))+\bmw$, $\forall \bmw \in \ell_p$.
	When $\bmw\in\ell_p$, one has $\Phiu{\Fb}{ \bmK}(\bmw)\in\ell_p$ because $\Emme\in\mathcal{L}_p$.   
    Moreover   $\Emme\in\mathcal{L}_p$ and $\Effe\in\mathcal{L}_p$
	imply that the operator $\bmw\mapsto \bmx$ defined by the composition of the operators $\bmw\mapsto(\Emme(\bmw),\bmw)$ and $\Effe$ is in  $\mathcal{L}_p$ as well. This is due to the property that the composition of operators in $\mathcal{L}_p$ is in $\mathcal{L}_p$.

	 We prove $2)$.   
  Set, for short, 
  $\Psix=\Phix{\Fb}{\bmC}$, 
  $\Psiu=\Phiu{\Fb}{\bmC}$, 
  $\Phixno =\Phix{\Fb}{ \bmK}$, and 
  $\Phiuno=\Phiu{\Fb}{ \bmK}$. By assumption, one has $\Emme=\Psiu$ and since $\Psiu\in\LL_p$ also $\Emme\in\LL_p$. By definition, $\Phiuno$ is the operator $\bmw\mapsto\bmu$ and, from \eqref{eq:tweqw} and Figure~\ref{fig:IMCscheme}, it coincides with $\Emme$. Hence \begin{equation}
\label{eq:PhiuisPsiu}
\Psiu=\Phiuno\,.
\end{equation}
It remains to prove that $\Phixno = \Psix$. Similar to \cite{furieri2022neural}, we proceed by induction. 
First, we show that $\Psi^x_{0}=\Upsilon^x_{0}$, where, as defined in Section~\ref{sec:introduction}.\ref{sec:notation}, $\Psi^x_{0}$ and $\Upsilon^x_{0}$ are the components of $\Psix$ and $\boldsymbol{\Upsilon}^{\mathbf{x}}$ at time zero.
Since $f_0=0$ and $w_0=x_0$, one has from \eqref{eq:system} that the closed-loop map $w_0\mapsto x_0$ is the identity, irrespectively of the controller. Therefore $\Upsilon_0^x = \Psi_0^x = I$.  
Assume now that, for a positive $j \in \mathbb{N}$ we have $\Upsilon^x_i = \Psi^x_i$ for all $0\leq i \leq j$. 
Since $(\Phixno,\Phiuno)$ and $(\Psix, \Psiu)$ are closed-loop maps, from \eqref{eq:operator_form} they verify
	\begin{equation}
	\label{eq:achievability_proof}
	\Upsilon^x_{j\hspace{-0.04cm}+\hspace{-0.04cm}1} \hspace{-0.1cm}=\hspace{-0.1cm} F_{j\hspace{-0.04cm}+\hspace{-0.04cm}1}(\Upsilon^x_{j:0},\Upsilon^u_{j:0})\hspace{-0.04cm}+\hspace{-0.01cm}I,\Psi^x_{j\hspace{-0.01cm}+\hspace{-0.04cm}1} \hspace{-0.1cm}=\hspace{-0.1cm} F_{j\hspace{-0.04cm}+\hspace{-0.01cm}1}(\Psi^x_{j:0},\Psi^u_{j:0})\hspace{-0.01cm}+\hspace{-0.04cm}I.
	\end{equation}
	But, from \eqref{eq:PhiuisPsiu}, one has $\Psi^u_{j:0}=\Upsilon^u_{j:0}$ and, by using the inductive assumption, one obtains $\Upsilon^x_{j\hspace{-0.04cm}+\hspace{-0.04cm}1}=\Psi^x_{j\hspace{-0.04cm}+\hspace{-0.04cm}1}$. This implies $\mathbf{K=C}$.
\end{proof}

Several comments are in order. 
First, Theorem~\ref{th:result_IMC} is about \textit{nominal stability} only as there is no model mismatch between the plant model and the one used in the controller. We analyze robust stability in Section~\ref{sec:robustness}. 
Second, it is well known that many IMC architectures are sufficient for preserving stability, both in the linear \cite{garcia1982internal} and the nonlinear \cite{economou1986internal} case.\footnote{Note, however, that  IMC  in \cite{economou1986internal} is developed in terms of continuous-time nonlinear input-output models, for which the effect of process noise is difficult to analyze. 
Moreover, the control objective is to track a reference signal to the plant output, which raises the problem of approximating inverses of nonlinear operators. In our work, we use instead discrete-time input-to-state models and analyze the closed-loop maps from process noise to control inputs and system states. Moreover, our goal is to solve optimal control rather than tracking problems.}  
It is also known that in the LTI setting, IMC is also necessary for preserving stability \cite{rivera1986internal} and provides an alternative to the Youla-Koucera parametrization \cite{zhou1998essentials}. 
\textcolor{black}{In this respect, Theorem~\ref{th:result_IMC}  provides a necessary condition for preserving stability also for nonlinear systems. This result is perhaps not surprising given that necessary and sufficient conditions for stabilizing wide classes of input-output nonlinear models, in the spirit of the Youla- Koucera parametrization, have been derived since the 80's \cite{anantharam1984stabilization, fujimoto2000characterization}. 
However, these controllers are not conceived in the  IMC form \cite{garcia1982internal, economou1986internal, bonassi2022recurrent} and they consider actuation and measurement disturbances, while our setup allows for the presence of process noise.
}

\textcolor{black}{The above insight is useful because the IMC structure facilitates the design and deployment of performance-boosting policies. First, IMC controllers are deployed using} the block-diagram structure shown in Figure~\ref{fig:IMCscheme}. In equation form, for a chosen operator $\Emme$, one simply computes the control input as follows:
\begin{subequations}
\label{eq:controller_IMC}
\begin{align}
    &\widehat{w}_t = x_t - f_t(x_{t-1:0},u_{t-1:0})\,, \label{eq:controller_IMC_1}\\
    &u_t = \mathcal{M}_t(\widehat{w}_{t:0})\,.\label{eq:controller_IMC_2}
\end{align}
\end{subequations}
\textcolor{black}{Second}, Theorem~\ref{th:result_IMC}  highlights that it is sufficient to search in the space of operators $\Emme\in\LL_p$ for describing all and only performance-boosting policies. While finding a parametrization of all operators $\Emme\in\LL_p$ might be prohibitive, we will show in Section \ref{sec:implementation} that one can use NNs for describing broad subsets of these operators. Moreover, the IMC structure lends itself to the development of policies that 
enjoy a distributed structure (see Section~\ref{sec:robustness}).

\subsubsection{The case of LTI systems with nonlinear costs}\label{subsub:linear}

\textcolor{black}{Consider the linear system \eqref{eq:linear_system} 
and let $z$ denote the forward time-shift operator.} When the system is asymptotically stable, the classical Youla  parametrization \cite{zhou1998essentials} states that all \emph{linear state-feedback} stabilizing control policies $\mathbf{u} = \mathbf{K}\xb$ can be written as
\begin{equation}
\label{eq:Youla_Linear}
    \mathbf{u} = \mathbf{Q}(z)\mathbf{x}-\frac{\mathbf{Q}(z)}{z}\left(A\mathbf{x}+B\mathbf{u}\right) \quad \mathbf{Q}(z) \in \mathcal{TF}_s\,,
\end{equation}
where $\mathbf{Q}(z)$ is the so-called Youla parameter. 
Here, $\mathcal{TF}_s$ denotes the set of stable transfer matrices --- that is, the set of matrices whose scalar entries are stable transfer functions.     
The class of linear control policies is globally optimal for standard LQG problems, and it allows optimizing over $\mathbf{Q} \in \mathcal{TF}_s$ using simple pole approximations and convex programming --- we refer to \cite{fisher2023approximation,fisher2022approximation} for state-of-the-art results.  However, nonlinear policies can be significantly more performing when the controller is distributed \cite{furieri2020sparsity}, or the cost function is nonlinear. As an immediate corollary of Theorem~\ref{th:result_IMC}, and in accordance with the core contribution of \cite{wang2022learning} \textcolor{black}{where the focus is on contracting closed-loops}, we have the following result for linear systems controlled by nonlinear policies.
\begin{corollary}
 Consider the linear system \eqref{eq:linear_system} and assume that it is asymptotically stable. 
 Then, all and only control policies 
 that make the closed-loop system $\ell_p$-stable
 are expressed as
 \begin{equation}
     \label{eq:all_policies}
     \mathbf{u} =  \Emme\left(\mathbf{x}-\frac{\left(A\mathbf{x}+B\mathbf{u}\right)}{z}\right)\,,
 \end{equation}
 where $\Emme \in \mathcal{L}_{p}$. 
 \end{corollary}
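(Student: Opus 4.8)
The plan is to obtain the corollary as a direct specialization of Theorem~\ref{th:result_IMC} to the linear operator $\mathbf{F}$, after checking that the theorem's standing hypothesis $\Effe\in\mathcal{L}_p$ holds in this case. First I would identify $\mathbf{F}$ for system~\eqref{eq:linear_system}: the strictly causal operator of~\eqref{eq:operator_form} is $\mathbf{F}(\mathbf{x},\mathbf{u})=(0,Ax_0+Bu_0,Ax_1+Bu_1,\dots)$, i.e. $\mathbf{F}(\mathbf{x},\mathbf{u})=z^{-1}(A\mathbf{x}+B\mathbf{u})$ in the forward-shift notation used in~\eqref{eq:Youla_Linear}. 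Substituting this into the implicit controller equation~\eqref{eq:input_M} turns $\mathbf{u}=\Emme(\mathbf{x}-\mathbf{F}(\mathbf{x},\mathbf{u}))$ into exactly~\eqref{eq:all_policies}, so the two parametrizations describe the same object and it only remains to transfer the two conclusions of Theorem~\ref{th:result_IMC} to this setting.

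Second, I would verify $\Effe\in\mathcal{L}_p$. Solving~\eqref{eq:linear_system} in operator form gives $\mathbf{x}=(I-z^{-1}A)^{-1}\bigl(z^{-1}B\mathbf{u}+\mathbf{w}\bigr)$; asymptotic stability means $\rho(A)<1$, so $(I-z^{-1}A)^{-1}=\sum_{k\ge 0}A^kz^{-k}$ has a geometrically decaying — hence $\ell_1$ — impulse response. Since convolution with an $\ell_1$ kernel is a bounded causal operator on $\ell_p$ for every $p$ (Young's inequality), both $\mathbf{w}\mapsto(I-z^{-1}A)^{-1}\mathbf{w}$ and $\mathbf{u}\mapsto(I-z^{-1}A)^{-1}z^{-1}B\mathbf{u}$ lie in $\mathcal{L}_p$, and therefore so does their sum $\Effe:(\mathbf{u},\mathbf{w})\mapsto\mathbf{x}$. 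This is the only step that genuinely uses the asymptotic-stability assumption, and it is the one requiring a short argument rather than a bare invocation; everything else is bookkeeping. As a by-product, the zero policy $\mathbf{u}=0$ is itself stabilizing, so the family in~\eqref{eq:all_policies} is nonempty, realized by $\Emme=0$.

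With the hypothesis in place, statement~1) of Theorem~\ref{th:result_IMC} says that every $\Emme\in\mathcal{L}_p$ inserted into~\eqref{eq:all_policies} produces an $\ell_p$-stable closed loop — this is the ``only'' direction, i.e. sufficiency of the parametrization. For completeness (the ``all'' direction), I would take an arbitrary causal policy $\mathbf{C}$ rendering the closed loop $\ell_p$-stable, so that $\Phix{F}{C},\Phiu{F}{C}\in\mathcal{L}_p$; statement~2) of Theorem~\ref{th:result_IMC}, applied with this $\mathbf{C}$ and the choice $\Emme=\Phiu{F}{C}\in\mathcal{L}_p$, yields $\mathbf{K}=\mathbf{C}$, showing that $\mathbf{C}$ is of the form~\eqref{eq:all_policies} with an $\mathcal{L}_p$ parameter. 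Combining the two directions gives the ``all and only'' claim. I do not anticipate a genuine obstacle here: since the corollary is advertised as an immediate consequence of Theorem~\ref{th:result_IMC}, the only care required is (i) matching the operator $\mathbf{F}$ to the expression $z^{-1}(A\mathbf{x}+B\mathbf{u})$ and (ii) the $\ell_1$/Young's-inequality check that $\Effe\in\mathcal{L}_p$ under asymptotic stability.
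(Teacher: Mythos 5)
Your proposal is correct and follows essentially the same route as the paper: the paper's proof is the one-line observation that asymptotic stability of \eqref{eq:linear_system} implies $\Effe\in\mathcal{L}_p$ for any $p\geq 1$, after which Theorem~\ref{th:result_IMC} applies directly. You simply make explicit the two details the paper leaves implicit --- the identification $\mathbf{F}(\mathbf{x},\mathbf{u})=z^{-1}(A\mathbf{x}+B\mathbf{u})$ and the $\ell_1$-impulse-response (Young's inequality) argument for $\Effe\in\mathcal{L}_p$ --- both of which are accurate.
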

 
 \begin{proof}
    The proof follows from Theorem~\ref{th:result_IMC} upon realizing that the asymptotic stability of system \eqref{eq:linear_system} implies that the corresponding operator $\Effe$ is in $\mathcal{L}_p$, for any $p\geq 1$.
 \end{proof}
In conclusion, as expected, the linear Youla parametrization \eqref{eq:Youla_Linear} is a special case of the proposed parametrization \eqref{eq:all_policies} with $\Emme = \mathbf{Q}$ and $\mathbf{Q} \in \mathcal{TF}_s$.

\subsubsection{Relationships with \cite{furieri2022neural} and nonlinear SLS}
In \cite{furieri2022neural}, we provided a slight generalization of Theorem ~\ref{th:result_IMC} and the results in Section \ref{subsub:linear} by also considering unstable systems $\mathbf{x}=\tilde{\mathbf{ F}}(\mathbf{x},\mathbf{u})+\mathbf{w}$ for which a pre-stabilizing controller $\mathbf{K}'$ exists, so that the overall policy is
\begin{align}
    \label{eq:CL_1}
    \mathbf{u} = \mathbf{K}'(\mathbf{x}) + \Emme(\mathbf{\hatbf{w}})\,. 
\end{align}
By letting $\mathbf{F}(\mathbf{x},\mathbf{u})=\mathbf{F}(\mathbf{x},\mathbf{K}'(\xb)+\mathbf{u})$, and assuming that both $\Effe$ and $\mathbf{K}'$ lie in $\mathcal{L}_p$, Theorem~\ref{th:result_IMC} coincides with Theorem 2 in \cite{furieri2022neural}. However, when $\mathbf{K}'\not \in\mathcal{L}_p$, Theorem 2 in \cite{furieri2022neural} highlights that $\Emme \in \mathcal{L}_{p}$ may no longer be a necessary condition for closed-loop $\ell_p$-stability, while being still sufficient.

Moreover, as highlighted in \cite{furieri2022neural}, there is a deep link between  Theorem~\ref{th:result_IMC} and the SLS parametrization of stabilizing controllers \cite{wang2019system, ho2020system}.  The idea behind the SLS approach \cite{wang2019system, ho2020system} is to circumvent the difficulty of characterizing stabilizing controllers, by instead directly designing stable closed-loop maps. Let us define the set of all \emph{achievable} closed-loop maps for system $\mathbf{F}$ as
\begin{equation}
    \label{eq:Phi_ach}
    \mathcal{CL}[\mathbf{F}] = \{(\Phix{F}{K},\Phiu{F}{K})~|~\mathbf{K}\text{ is causal}\}\,,
\end{equation}
and the set of all \emph{achievable and stable} closed-loop maps as
\begin{equation}
    \label{eq:Phi_ach_stable}
    \mathcal{CL}_p[\Fb] = \{(\Psix,\Psiu) \in \mathcal{CL}[\Fb]~|~(\Psix,\Psiu) \in \mathcal{L}_{p}\}\,.
\end{equation}
Note that, if $(\Psix,\Psiu) \in \mathcal{CL}_p[\mathbf{F}]$, then $\mathbf{x} = \Psix(\mathbf{w}) \in \ell_{p}^n$ and $\mathbf{u} = \Psiu(\mathbf{w})\in \ell_{p}^m$ for all $\wb \in \ell_{p}^n$. Based on Theorem~III.3 of \cite{ho2020system}, and adding the requirement that the closed-loop maps must belong to $\mathcal{L}_p$, we summarize the main SLS result for nonlinear discrete-time systems.
\begin{theorem}[\emph{Nonlinear SLS parametrization \cite{ho2020system}}]
\label{th:SLS}
The following two statements hold true.
\begin{enumerate}
    \item The set $\mathcal{CL}_p[\Fb]$ of all achievable and stable closed-loop responses admits the following characterization:
    \begin{subequations}
    \begin{align}
        \mathcal{CL}_p[\Fb] = \{&(\Psix,\Psiu)|~~(\Psix,\Psiu)\text{ are causal}\,,\label{SLS_equality0}\\
        &\Psix = \mathbf{F}(\Psix,\Psiu)+\mathbf{I}\,,\label{SLS_equality1}\\
        &(\Psix,\Psiu) \in \mathcal{L}_{p} \label{SLS_equality2} \}\,.
    \end{align}
    \end{subequations}
\item For any $(\Psix,\Psiu) \in \mathcal{CL}_p[\Fb]$, the operator $\Psix$ is invertible and the causal controller 
\begin{equation}
   \mathbf{u} = \mathbf{K}(\mathbf{x}) = \Psiu\left((\Psix)^{-1}(\mathbf{x})\right)\label{eq:equivalent_representation}\,,
\end{equation}
 is the only one that achieves the stable closed-loop responses $(\Psix,\Psiu)$.

\end{enumerate}
\end{theorem}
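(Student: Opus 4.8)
The plan is to prove the two statements separately, treating the second as essentially a corollary of the internal-model reasoning already used for Theorem~\ref{th:result_IMC}. For statement 1), the strategy is to prove the set equality by double inclusion. For the ``$\subseteq$'' direction, let $(\Psix,\Psiu)\in\mathcal{CL}_p[\Fb]$. By definition there is a causal $\mathbf{K}$ with $\Psix = \Phix{F}{K}$, $\Psiu=\Phiu{F}{K}$, so causality is immediate; and $(\Psix,\Psiu)\in\mathcal{L}_p$ holds by \eqref{eq:Phi_ach_stable}. The content is the closed-loop identity \eqref{SLS_equality1}: since $\mathbf{x} = \Phix{F}{K}(\mathbf{w})$ and $\mathbf{u}=\Phiu{F}{K}(\mathbf{w})$ satisfy \eqref{eq:operator_form}, substituting $\mathbf{w}$ (viewing it as ranging over all of $\ell_p^n$) gives $\Psix(\mathbf{w}) = \mathbf{F}(\Psix(\mathbf{w}),\Psiu(\mathbf{w}))+\mathbf{w}$ for every $\mathbf{w}$, which is exactly $\Psix=\mathbf{F}(\Psix,\Psiu)+\mathbf{I}$ as operators. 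For the ``$\supseteq$'' direction, take any causal $(\Psix,\Psiu)\in\mathcal{L}_p$ satisfying $\Psix=\mathbf{F}(\Psix,\Psiu)+\mathbf{I}$, and one must exhibit a causal controller $\mathbf{K}$ realizing these maps — this is precisely what statement 2) supplies, so I would prove 2) first and then invoke it here.

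For statement 2), the first task is invertibility of $\Psix$. The idea: $\Psix$ is causal, and its time-zero component is $\Psi_0^x = I$ (from $f_0=0$, $w_0=x_0$, so the $w_0\mapsto x_0$ map is the identity regardless of controller, exactly as argued in the proof of Theorem~\ref{th:result_IMC}). For $t\ge 1$, the identity \eqref{SLS_equality1} reads $\Psi^x_t(w_{t:0}) = F_t(\Psi^x_{t-1:0}(w_{t-1:0}),\Psi^u_{t-1:0}(w_{t-1:0})) + w_t$; since $F_t$ is strictly causal it does not see $w_t$, so $\Psi^x_t$ has the form ``(function of $w_{t-1:0}$)$\,+\,w_t$''. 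Hence given $x_{t:0}$ one recovers $w_t = x_t - (\text{function of the already-recovered } w_{t-1:0})$, which defines $(\Psix)^{-1}$ recursively as a causal operator. Next, define $\mathbf{K}:=\Psiu\circ(\Psix)^{-1}$, which is causal as a composition of causal operators. To check it achieves the claimed closed loop: feeding any $\mathbf{w}\in\ell^n$ into \eqref{eq:system}-\eqref{eq:control} with this $\mathbf{K}$, one shows by induction on $t$ that $x_t = \Psi^x_t(w_{t:0})$ and $u_t=\Psi^u_t(w_{t:0})$, using the closed-loop recursion for $x_t$, the definition of $\mathbf{K}$, and the identity \eqref{SLS_equality1}; the base case $t=0$ is $\Psi_0^x=I$. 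Uniqueness: if $\mathbf{K}'$ also achieves $(\Psix,\Psiu)$, then for every $\mathbf{w}$ we have $\mathbf{u} = \Psiu(\mathbf{w}) = \Psiu((\Psix)^{-1}(\mathbf{x}))$ and $\mathbf{x}=\Psix(\mathbf{w})$, and since $\mathbf{x}$ ranges over the image of $\Psix$ (which, by the $\Psi_0^x=I$ argument plus the recursion, is all realizable state trajectories), $\mathbf{K}'$ and $\mathbf{K}=\Psiu\circ(\Psix)^{-1}$ agree on every relevant input.

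The main obstacle I anticipate is handling invertibility and the reconstruction of $\mathbf{w}$ from $\mathbf{x}$ cleanly at the operator level: one must be careful that $(\Psix)^{-1}$ is well-defined and causal \emph{as an operator on all of $\ell^n$} (not merely on $\ell_p^n$), and that the induction identifying the closed loop is stated for arbitrary $\mathbf{w}\in\ell^n$ so that stability then follows from $(\Psix,\Psiu)\in\mathcal{L}_p$. The strict causality of $\mathbf{F}$ is the linchpin that makes this work, exactly as in the footnote to Theorem~\ref{th:result_IMC}, and I would lean on that observation rather than re-deriving it. Everything else is the same inductive bookkeeping already rehearsed in the proof of Theorem~\ref{th:result_IMC}, so I would keep those steps terse and cite that proof.
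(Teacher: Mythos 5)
Your argument is sound, but note that the paper does not actually prove this theorem: it is imported verbatim (with the added $\mathcal{L}_p$ requirement) from Theorem~III.3 of \cite{ho2020system}, so there is no in-paper proof to compare against. Your sketch reproduces the standard argument behind that result — double inclusion for statement 1), recursive inversion of $\Psix$ via strict causality of $\mathbf{F}$, and an induction showing the controller $\Psiu\circ(\Psix)^{-1}$ realizes the prescribed maps — and that induction is essentially the same bookkeeping the paper \emph{does} carry out in its proof of part 2) of Theorem~\ref{th:result_IMC} ($\Psi^x_0=I$ from $f_0=0$ and $w_0=x_0$, then propagation through \eqref{SLS_equality1}). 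The one point you rightly flag, that $(\Psix)^{-1}$ must be constructed causally on all of $\ell^n$ and not merely on $\ell_p^n$, is handled correctly by your recursive reconstruction $w_t = x_t - F_t(\Psi^x_{t-1:0},\Psi^u_{t-1:0})(w_{t-1:0})$, and it also closes your uniqueness argument since it makes $\Psix$ a bijection on $\ell^n$. No gaps.
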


Theorem~\ref{th:SLS} clarifies that any policy $\mathbf{K}(\xb)$ achieving  $\ell_p$-stable closed-loop maps can be described in terms of two causal operators $(\Psix,\Psiu) \in \mathcal{L}_{p}$ complying with the nonlinear functional equality \eqref{SLS_equality1}. Therefore, the NOC problem admits an equivalent Nonlinear SLS (N-SLS) formulation:
\begin{alignat}{3}
\operatorname{N-SLS:}~~&\min_{(\Psix,\Psiu)}&& \quad \mathbb{E}_{w_{T:0}}\left[L(x_{T:0},u_{T:0})\right]
\tag{$\star$}\\
&~~~\operatorname{s.t.}~~ && \quad x_t = \Psi^x_t(w_{t:0})\,,~~~u_t = \Psi^u_t(w_{t:0})\,,\nonumber\\
&~~&&\quad (\Psix,\Psiu)\in \mathcal{CL}_p[\Fb]\,,  t = 0,1,\ldots  \nonumber
\end{alignat}

\noindent According to Theorem~\ref{th:SLS}, the constraint $(\Psix,\Psiu)\in \mathcal{CL}_p[\Fb]$ is equivalent to requiring that $(\Psix,\Psiu)$ are causal and verify \eqref{SLS_equality1}-\eqref{SLS_equality2}. 
The constraint \eqref{SLS_equality1} simply defines the operator $\Psix$ in terms of $\Psiu$ and it can be computed explicitly because $\mathbf{F}$ is strictly causal. The main challenge is to comply with \eqref{SLS_equality2}. Indeed, it is hard to generate $\Psiu \in \mathcal{L}_{p}$ such that the corresponding $\Psix$ satisfies $\Psix \in \mathcal{L}_{p}$. The paper \cite{ho2020system} suggests directly searching over $\ell_p$-stable operators $(\Psix,\Psiu)$ and abandoning the goal of complying with \eqref{SLS_equality1} exactly.  One can then study robust stability when \eqref{SLS_equality1} only holds approximately as per Theorem IV.2 in \cite{ho2020system}. However, with the exception of polynomial systems \cite{conger2022nonlinear}, this way of proceeding may result in conservative control policies or fail to produce a stabilizing controller. Instead, for the case of stable or pre-stabilized systems,  Theorem \ref{th:result_IMC} can be seen as a way of parametrizing all stabilizing controllers that circumvents completely the problem of fulfilling \eqref{SLS_equality1}-\eqref{SLS_equality2}.

\section{Beyond Closed-loop Stability: Handling Model Uncertainty and Distributed Architectures}
\label{sec:robustness}

This section tackles the performance boosting problem (Problem~\ref{prob:boosting}) under more intricate real-world constraints beyond just closed-loop stability. Firstly, Theorem~\ref{th:result_IMC} suffers from requiring perfect plant knowledge for controller design. In reality, ensuring closed-loop stability despite an imperfect model is crucial. Secondly, control policies in large-scale applications like power grids and traffic systems are inherently distributed. This means they rely solely on local sensor data and communication, posing significant challenges to achieving network-level robustness and stability.


\subsection{Robustness against model-mismatch}

Let us denote the nominal model available for design as ${\hatbf{F}}(\mathbf{x},\mathbf{u})$ and the real unknown plant as 
\begin{equation}
    \label{eq:mismatch_plant}
    \mathbf{F}(\mathbf{x},\mathbf{u}) = \hatbf{F}(\bmx,\bmu) + \bm{\Delta}(\bmx,\bmu)\,,
\end{equation}
where $\bm{\Delta}$ is a strictly causal operator representing the model mismatch. 
Let $\delta_t(x_{t-1:0},u_{t-1:0})$ be the time representation of the mismatch operator $ \mathbf{\Delta}$. Since for each sequence of disturbances $\mathbf{w} \in \ell^n$ and inputs $\mathbf{u} \in \ell^m$ the dynamics represented by \eqref{eq:system} with $f_t(x_{t-1:0},u_{t-1:0})$ replaced by $\widehat{f}_t(x_{t-1:0},u_{t-1:0})+\delta_t(x_{t-1:0},u_{t-1:0})$  produces a unique state sequence $\mathbf{x} \in \ell^n$, the equation 
    \begin{equation}
\label{eq:operator_form_perturbed}
    \mathbf{x} = \mathbf{F}(\mathbf{x},\mathbf{u}) + \mathbf{w}\,,
\end{equation}
defines again a unique transition operator $\Effe:(\mathbf{u},\mathbf{w})\mapsto \mathbf{x}$, which provides an input-to-state model of the perturbed system.

Here, we show that when $\bm{\Delta}$ can be described by an $\mathcal{L}_p$ operator with finite gain, we can always design operators $\Emme$ with sufficiently small $\mathcal{L}_p$-gain that stabilize the real closed-loop system. More specifically, letting $\gamma_{\bm{\Delta}}$ be the maximum $\mathcal{L}_p$-gain of the model mismatch $\bm{\Delta}$, it is possible to design controllers $\mathbf{K}$ that comply with the following robust version of the stability constraint \eqref{NOC:stab}:
\begin{equation}
\label{eq:robust_stability}
    (\bm{\Phi}^{*}[\widehat{\bmF}+\bm{\Delta},\mathbf{K}]) \in \mathcal{L}_{p}\,, ~ 
    *\in\{\bmx,\bmu \} \,, ~
    \forall \bm{\Delta}|~\gamma(\bm{\Delta}) \leq \gamma_{\bm{\Delta}}\,. 
\end{equation}
This result, which is given in the next theorem, 
refers to the control scheme in Figure~\ref{fig:small_gain}. 

\begin{figure}
	\centering
    \includegraphics[width=0.8\linewidth]{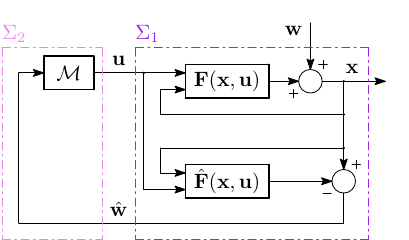}
	\caption{The closed-loop system when the nominal model ${\hatbf{F}}(\mathbf{x},\mathbf{u})$  used in the IMC controller and the real plant $
    \mathbf{F}(\mathbf{x},\mathbf{u})={\hatbf{F}}(\mathbf{x},\mathbf{u}) + {\mathbf{\Delta}}(\mathbf{x},\mathbf{u})$ differ by the perturbation ${\mathbf{\Delta}}\in\mathcal{L}_p$. Compared to Figure \ref{fig:IMCscheme} the blocks have been rearranged to highlight the subsystems used in the small-gain argument adopted in the proof of Theorem~\ref{th:result_robust}.}
\label{fig:small_gain}
\end{figure}

\begin{theorem}
	\label{th:result_robust} 
	Assume that the mismatch operator $\bm{\Delta}$ in \eqref{eq:mismatch_plant} has finite $\mathcal{L}_p$-gain $\gamma(\mathbf{\Delta})$. Furthermore, assume that
 the operator $\Effe$  has finite $\mathcal{L}_p$-gain $\gamma(\mathbf{\Effe})$.  Then, for any $\Emme$ such that 
 \begin{equation}
 \label{eq:condition_robustness}
 \gamma(\bmcalM)<\gamma(\mathbf{\Delta})^{-1}(\gamma(\Effe)+1)^{-1}\,,
 \end{equation}
 the control policy given by
 \begin{subequations}
 \label{eq:robust_policy}
 \begin{align}
 &\widehat{w}_t = x_t - \widehat{f}_t(x_{t-1:0},u_{t-1:0})\,,\label{eq:robust_policy_1}\\
 &u_t = \mathcal{M}_t(\widehat{w}_{t:0})\,,\label{eq:robust_policy_2}
 \end{align}
 \end{subequations}
 stabilizes the closed-loop system. 
 \end{theorem}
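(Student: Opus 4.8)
The plan is to set up a small-gain argument on the interconnection depicted in Figure~\ref{fig:small_gain}. The key observation is that, since the IMC controller now uses the \emph{nominal} model $\hatbf{F}$ rather than the true plant $\mathbf{F}$, the disturbance estimate $\hatbf{w}$ no longer coincides with $\mathbf{w}$; instead, from \eqref{eq:robust_policy_1}, \eqref{eq:operator_form_perturbed}, and \eqref{eq:mismatch_plant} one computes
\begin{equation*}
\hatbf{w} = \mathbf{x} - \hatbf{F}(\mathbf{x},\mathbf{u}) = \mathbf{F}(\mathbf{x},\mathbf{u}) + \mathbf{w} - \hatbf{F}(\mathbf{x},\mathbf{u}) = \bm{\Delta}(\mathbf{x},\mathbf{u}) + \mathbf{w}\,,
\end{equation*}
so the residual feedback term driving $\Emme$ is exactly $\bm{\Delta}(\mathbf{x},\mathbf{u})$ on top of the exogenous $\mathbf{w}$. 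This is the ``opening the loop fails by $\bm{\Delta}$'' phenomenon, and it is why a gain condition on $\Emme$ is needed.

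Next I would identify a single loop signal, say $\mathbf{u}$ (equivalently $\hatbf{w}$), and write the closed-loop fixed-point equation it satisfies: $\mathbf{u} = \Emme(\bm{\Delta}(\mathbf{x},\mathbf{u}) + \mathbf{w})$ where $\mathbf{x} = \Effe(\mathbf{u},\mathbf{w})$. Composing, $\mathbf{u} = \Emme\big(\bm{\Delta}(\Effe(\mathbf{u},\mathbf{w}),\mathbf{u}) + \mathbf{w}\big)$. Treating $\mathbf{w}\in\ell_p$ as a fixed exogenous input, this is a self-map of $\ell_p$ in $\mathbf{u}$; I would bound its incremental/absolute gain by chaining the finite-gain bounds. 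The crucial estimate is on $\bm{\Delta}(\Effe(\mathbf{u},\mathbf{w}),\mathbf{u})$: since $\bm{\Delta}$ has gain $\gamma(\bm{\Delta})$ acting on the pair $(\mathbf{x},\mathbf{u})$, and $\Effe$ has gain $\gamma(\Effe)$ mapping $(\mathbf{u},\mathbf{w})\mapsto\mathbf{x}$, one gets $\|\bm{\Delta}(\mathbf{x},\mathbf{u})\|_p \le \gamma(\bm{\Delta})(\|\mathbf{x}\|_p + \|\mathbf{u}\|_p) \le \gamma(\bm{\Delta})\big((\gamma(\Effe)+1)\|\mathbf{u}\|_p + \gamma(\Effe)\|\mathbf{w}\|_p\big)$ after absorbing the $\mathbf{w}$-contribution through $\Effe$. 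Applying $\Emme$ with gain $\gamma(\Emme)$ then yields $\|\mathbf{u}\|_p \le \gamma(\Emme)\gamma(\bm{\Delta})(\gamma(\Effe)+1)\|\mathbf{u}\|_p + (\text{terms in }\|\mathbf{w}\|_p)$, and condition \eqref{eq:condition_robustness} makes the coefficient of $\|\mathbf{u}\|_p$ strictly less than one, so $\|\mathbf{u}\|_p$ can be solved for in terms of $\|\mathbf{w}\|_p$, giving $\mathbf{u}\in\ell_p$; then $\mathbf{x} = \Effe(\mathbf{u},\mathbf{w})\in\ell_p$ as well, establishing that $\Phix{F}{K}$ and $\Phiu{F}{K}$ are in $\mathcal{L}_p$.

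The main obstacle I anticipate is rigor around \emph{well-posedness and causality} rather than the gain arithmetic itself: one must argue that the closed-loop signals are well-defined in $\ell^n\times\ell^m$ in the first place (this follows from strict causality of $\hatbf{F}$ and $\bm{\Delta}$, as noted after \eqref{eq:operator_form_perturbed}, so the recursion \eqref{eq:robust_policy} determines $u_t$ from $u_{t-1:0}$ and $x_{t:0}$), and then that the $\ell_p$-bound derived above genuinely certifies membership in $\ell_p$ — for instance via a truncation argument, applying the gain inequalities to finite-horizon truncations $\mathbf{u}_{0:N}$ and letting $N\to\infty$, so that no a priori assumption $\mathbf{u}\in\ell_p$ is smuggled in. A secondary subtlety is bookkeeping the mixed argument of $\bm{\Delta}$ (it depends on both $\mathbf{x}$ and $\mathbf{u}$); I would handle this by treating $\gamma(\bm{\Delta})$ as the gain from the stacked signal, consistent with the finite-$\mathcal{L}_p$-gain definition given in the notation section, so that the stated constant $(\gamma(\bm{\Delta}))^{-1}(\gamma(\Effe)+1)^{-1}$ emerges exactly.
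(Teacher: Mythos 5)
Your proposal is correct and follows essentially the same route as the paper: the identity $\hatbf{w}=\bm{\Delta}(\Effe(\mathbf{u},\mathbf{w}),\mathbf{u})+\mathbf{w}$ followed by a small-gain chaining of $\gamma(\bm{\Delta})$, $\gamma(\Effe)$, and $\gamma(\Emme)$ that produces the contraction coefficient $\gamma(\Emme)\gamma(\bm{\Delta})(\gamma(\Effe)+1)<1$ under \eqref{eq:condition_robustness}. The only cosmetic difference is that you close the loop inequality on $\|\mathbf{u}\|_p$ whereas the paper closes it on $\|\hatbf{w}\|_p$, and your remarks on well-posedness and truncation make explicit a point the paper leaves implicit.
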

\begin{proof}
	We first show that operators $\mathbf{F}$ and $\Effe$ verify
 \begin{equation}
     \label{eq_F_adn_Effe}
     \mathbf{F}(\Effe(\mathbf{u},\mathbf{w}),\mathbf{u})=\Effe(\mathbf{u},\mathbf{w})-\mathbf{w}\,.
 \end{equation}
	This follows by substituting $ \mathbf{x}=\Effe(\mathbf{u},\mathbf{w})$  in \eqref{eq:operator_form_perturbed}. We now compute the $\mathcal{L}_p$-gain of the operator $\Sigma_1:(\mathbf{u},\mathbf{w})\mapsto\hatbf{w}$ in the right frame of Figure~\ref{fig:small_gain}:
 \begin{align}
    \hatbf{w} &= \Effe(\mathbf{u},\mathbf{w})-\hatbf{F}(\Effe(\mathbf{u},\mathbf{w}),\mathbf{u}) \nonumber \\
    &=\mathbf{F}(\Effe(\mathbf{u},\mathbf{w}),\mathbf{u})-\hatbf{F}(\Effe(\mathbf{u},\mathbf{w}),\mathbf{u})+\mathbf{w} \nonumber\\
    &= \mathbf{\Delta}(\Effe(\mathbf{u},\mathbf{w}),\mathbf{u})+ \mathbf{w} \,,\label{eq:u_to_twDelta}
\end{align}
 where the first equality follows from \eqref{eq_F_adn_Effe}. 
 Using the definition of $\mathcal{L}_p$-gain for the operator $\mathbf{y}=\mathbf{\Delta}(\mathbf{x},\mathbf{u})$ one has 
 $||\mathbf{y}||_p \leq \gamma(\mathbf{\Delta})(||\mathbf{x}||_p+||\mathbf{u}||_p)$, 
 and, by using \eqref{eq:u_to_twDelta} and $\mathbf{u} = \Emme(\hatbf{w})$, one obtains\footnote{For improving the clarity of the proof, from here onwards, we omit the subscript $p$ of the signal norms.}
  \begin{align*}
&||\hatbf{w}||\leq \gamma(\mathbf{\Delta})(||\Effe(\mathbf{u},\mathbf{w})||+||\mathbf{u}||)+||\mathbf{w}|| \nonumber \\
&\leq \gamma(\mathbf{\Delta})(\gamma(\Effe)||\mathbf{w}||+\gamma(\Effe) ||\mathbf{u}||+||\mathbf{u}||)+||\mathbf{w}|| \nonumber \\
&\leq (\gamma(\mathbf{\Delta})\gamma(\Effe)+1)||\mathbf{w}||+\gamma(\mathbf{\Delta})(\gamma(\Effe)+1) \gamma(\Emme)||\hatbf{w}||\,.
\end{align*}
\textcolor{black}{By gathering all the terms involving $||\hatbf{w}||$ to the left-hand side we obtain
\begin{equation*}
    (1-\gamma(\bm{\Delta}) \gamma(\Emme)\left(\gamma(\Effe)+1\right))||\hatbf{w}||\leq (\gamma(\bm{\Delta})\gamma(\Effe)+1)||\mathbf{w}||\,.
\end{equation*}
Since \eqref{eq:condition_robustness} holds, we have that $1-\gamma(\bm{\Delta})\gamma(\Emme)(\gamma(\Effe)+1)>0$, and hence}
\begin{equation}
\label{eq:map_w->w_hat}
    ||\hatbf{w}|| \leq \left(\frac{\gamma(\bm{\Delta})\gamma(\Effe)+1}{1-\gamma(\bm{\Delta}) \gamma(\Emme)\left(\gamma(\Effe)+1\right)} \right)||\wb||\,.
\end{equation}

Next, we plug the upper bound \eqref{eq:map_w->w_hat} into the inequality $||\mathbf{u}||\leq\gamma(\Emme)||\hatbf{w}||$ to obtain
\begin{equation}
    \label{eq:robust_loop_u}
    ||\mathbf{u}||\leq \left(\frac{\gamma(\Emme)\left(\gamma(\bm{\Delta})\gamma(\Effe)+1\right)}{1-\gamma(\bm{\Delta}) \gamma(\Emme)(\gamma(\Effe)+1)} \right)||\mathbf{w}||\,,
\end{equation}
and subsequently, we plug \eqref{eq:robust_loop_u} into 
the inequality $||\mathbf{x}||\leq \gamma(\Effe)(||\mathbf{u}||+||\mathbf{w}||)$ to obtain
\begin{equation}
    \label{eq:robust_loop_x}
    ||\mathbf{x}||\leq \left(\gamma(\Effe)\frac{1+\gamma(\Emme)\left(1-\gamma(\bm{\Delta})\right)}{1-\gamma(\bm{\Delta}) \gamma(\Emme)(\gamma(\Effe)+1)}\right)||\wb||\,.
\end{equation}
The last step is to verify that the maps $\wb\rightarrow\xb$ and $\wb\rightarrow\ub$ have a finite $\mathcal{L}_p$-gain. This is done by checking that the gains in \eqref{eq:robust_loop_u} and \eqref{eq:robust_loop_x} are positive values when the gain of $\Emme$ is sufficiently small. \textcolor{black}{Since  \eqref{eq:condition_robustness} holds}, the denominator in \eqref{eq:robust_loop_u} is positive. Since the numerator of \eqref{eq:robust_loop_u} is always positive, we conclude that the map $\wb\rightarrow \ub$ has an $\mathcal{L}_p$-gain. Similarly for \eqref{eq:robust_loop_x}, since \eqref{eq:condition_robustness} implies that $\gamma(\Emme) \gamma(\bm{\Delta})<1$, we have that both numerator and denominator are positive. This implies that the map $\wb\rightarrow \xb$ has an $\mathcal{L}_p$-gain, as desired.
\end{proof}

The robustness condition \eqref{eq:condition_robustness} highlights a trade-off between ($i$) the degree of tolerable uncertainty in the mismatch between nominal and real dynamics, and ($ii$) the extent of the set of stabilizing control policies that we are permitted to optimize over. Specifically, \eqref{eq:condition_robustness} ensures that, for any model mismatch $\bm{\Delta} \in \mathcal{L}_p$, there always exists a range of admissible gains for $\Emme$ such that the closed-loop is stable. This enables one to freely learn over all appropriately gain-bounded operators.  \textcolor{black}{Further note that  Theorem~\ref{th:result_robust} is not conservative when 
$\bm{\Delta}=0$ 
--- this is unlike the classical application of the small-gain theorem \cite{zames1966input} which would enforce that $\gamma(\mathbf{K})<(\gamma(\Effe))^{-1}$ even when $\bm{\Delta}=0$. 
Indeed, when the model is fully known, the right-hand side of \eqref{eq:condition_robustness} diverges to infinity, allowing the gain of $\Emme$ to be any finite value, although without imposing an upper bound, and therefore recovering the completeness result of Theorem~\ref{th:result_IMC}.} \textcolor{black}{Last, we remark that the relationships \eqref{eq:robust_loop_u} and \eqref{eq:robust_loop_x} formally quantify the extent to which the model mismatch can deteriorate the amplification of disturbances on the closed loop trajectories $(\mathbf{x},\mathbf{u})$ for the system, for a given policy. However, it remains open how much the model uncertainty deteriorates the performance of the \emph{optimal} policy. Such questions have only been rigorously answered for the linear-quadratic case, see, for instance, \cite{dean2017sample,zheng2021sample}.} 
%
\begin{remark}[Robust stability of nonlinear SLS]
    The authors of \cite{ho2020system} characterize robust stability of nonlinear SLS against mismatch 
    in satisfying the achievability constraint \eqref{SLS_equality1}. Specifically, \cite{ho2020system} focuses on the scenario where the control policy is a mapping $\xb\rightarrow\ub$ in the form 
    \begin{align}
        \tilde{\wb}&= \xb - (\Psix-\mathbf{I}) \tilde{\wb}\,, \label{eq:ho_control_1}\\
        \ub&= \Psiu(\tilde{\wb})\,,
    \end{align}
    \textcolor{black}{where $\tilde{\wb}$ represent the internal state of the controller,} for some $(\Psix,\Psiu) \in \mathcal{L}_p$ which are not assumed to perfectly comply with  \eqref{SLS_equality1}. Accordingly, the authors define a mismatch operator 
   \begin{equation}
   \label{eq:mismatch_operator_SLS}
       \bm{\Xi} = \mathbf{F}(\Psix,\Psiu)+\mathbf{I}-\Psix \,.
   \end{equation}
    Then, Theorem~IV.2 of \cite{ho2020system} proves closed-loop stability as long as $\gamma\left(\bm{\Xi}\right)<1$. Since $\bm{\Xi}$ measures the degree of violation of the achievability constraint rather than the degree of model uncertainty, a robust stability analysis based on verifying $\gamma(\bm{\Xi})<1$ tailored to the case $\mathbf{F}=\hatbf{F}+\bm{\Delta}$ may not be straightforward, and it is not attempted in \cite{ho2020system}. For this case, instead, Theorem~\ref{th:result_robust} provides an upper bound on the admissible gains for $\Emme$; this is achieved by 
    exploiting the IMC structure of the policy
    \eqref{eq:robust_policy}, and bounding the effect of model uncertainty on the closed-loop map for the ground-truth system.
\end{remark}

\subsection{Distributed controllers for large-scale plants}
\label{sec:distributed}
When dealing with large-scale cyber-physical systems, one may consider that 
the plant \eqref{eq:system} is composed of a network of $N$ dynamically interconnected nonlinear subsystems. To model this scenario,
we introduce an undirected coupling graph~$\mathcal{G}=(\mathcal{V}, \mathcal{E})$, where the nodes $\mathcal{V}=\{1, \dots, N\}$ represent the subsystems in the network, and the set of edges~$\mathcal{E}$ encode pairs of subsystems~$\{i,j\}$ that are dynamically interconnected through state variables. Specifically,  the dynamics of each subsystem $i\in \mathcal{V}$ is 
\begin{equation}
    x_t^{[i]}=f_t^{[i]}(x^{[\mathcal{N}_i]}_{t-1:0},u^{[i]}_{t-1:0})+w^{[i]}_t,  
    \ \ \    t=1,2,\ldots \label{eq:nonlinsyslocal}
\end{equation}
where state and input of each subsystem $i\in\mathcal{V}$ at time $t=1,2,\ldots$ are denoted by $x_t^{[i]}\in\mathbb{R}^{n_i}$ and $u_t^{[i]} \in \mathbb{R}^{m_i}$ respectively, and the initial state is
$x^{[i]}_0 \in \mathbb{R}^{n_i}$. In operator form we have
\begin{equation} \label{eq:sub-operator}
    \mathbf{x}^{[i]} =\mathbf{F}^{[i]}(\mathbf{x}^{[\mathcal{N}_i]},\mathbf{u}^{[i]})+\mathbf{w}^{[i]},
\end{equation}
where $\mathbf{F}^{[i]}:\ell^{n_{\mathcal{N}_i}}\times \ell^{m_i} \rightarrow \ell^{n_i}$. Note that, by stacking the subsystem dynamics in~\eqref{eq:nonlinsyslocal} together, we recover a system in the form \eqref{eq:system}, where $x_t=col_{i\in\mathcal{V}}(x_t^{[i]})\in\mathbb{R}^{n}$, $u_t=col_{i\in\mathcal{V}}(u_t^{[i]})\in\mathbb{R}^{m}$, and $w_t=col_{i\in\mathcal{V}}(w_t^{[i]})\in\mathbb{R}^{n}$.

When controlling networked systems in the form \eqref{eq:sub-operator}, a common scenario is that the local feedback controller $u_t^{[i]}$ can only access information made available by its neighbors according to a communication network with the same topology of  $\mathcal{G}$. 
This requirement translates into imposing the following additional constraint to the performance-boosting problem (Problem~\ref{prob:boosting}):
    \begin{equation}
   \mathbf{u}^{[i]} = \mathbf{K}^{[i]}(\mathbf{x}^{[\mathcal{N}_i]}),
   \quad \forall i\in \mathcal{V}\label{eq:distributed_constraint}\,.
    \end{equation}
The challenge becomes to parametrize only those stabilizing policies that are distributed according to \eqref{eq:distributed_constraint}. This can be achieved by exploiting the IMC controller architecture \eqref{eq:controller_IMC} in combination with the network sparsity of $\mathbf{F}$ highlighted in \eqref{eq:sub-operator}.  Let us consider, for example, the networked plant of Figure~\ref{fig:distributed_scheme}, where $\mathbf{u}^{[i]}$ depends on the local disturbance reconstructions $\hatbf{w}^{[i]}$ only, that is, $\mathbf{u}^{[i]}=\Emme^{[i]}(\hatbf{w}^{[i]})$.
In order to reconstruct $\hatbf{w}^{[1]}$, agent $i=1$ needs to evaluate the local dynamics $\mathbf{F}^{[1]}(\mathbf{x}^{[1]},\mathbf{x}^{[3]},\mathbf{u}^{[1]})$; this, in turns, requires a measurement of the state $\mathbf{x}^{[3]}$ over time. Repeating this reasoning for the agents $i=2$ and $i=3$, one obtains an overall control policy $\mathbf{K}(\mathbf{x})$ whose agent-wise components are computed relying on measurements from neighboring subsystems only, thus complying with \eqref{eq:distributed_constraint}. We formalize this reasoning in the next proposition.

\begin{proposition}
\label{prop:distributed}
Let graph $\mathcal{G} = (\mathcal{V},\mathcal{E})$ describe the topology of a plant $\mathbf{F}$ as per \eqref{eq:sub-operator}. Consider an IMC control policy \eqref{eq:controller_IMC} where the operator $\Emme \in \mathcal{L}_p$ is \emph{decentralized}, that is, $\Emme^{[i]}(\hatbf{w})=\Emme^{[i]}(\hatbf{w}^{[i]})$ for every agent $i \in \mathcal{V}$. Then, the closed-loop system is $\ell_p$-stable and the corresponding control policy 
$\mathbf{u} =\mathbf{K}(\mathbf{x})$ 
is distributed according to \eqref{eq:distributed_constraint}.
\end{proposition}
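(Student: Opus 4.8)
The plan is to prove two things: closed-loop $\ell_p$-stability, and the distributed (neighbor-only) structure of the induced controller $\mathbf{K}$. Stability is immediate: since $\Emme$ is decentralized, it is in particular an operator in $\mathcal{L}_p$ (being block-diagonal, each block $\Emme^{[i]}\in\mathcal{L}_p$, and stacking finitely many $\mathcal{L}_p$ operators yields an $\mathcal{L}_p$ operator), so statement~1) of Theorem~\ref{th:result_IMC} applies directly and gives that $(\Phix{F}{K},\Phiu{F}{K})\in\mathcal{L}_p$. Hence the only real content is establishing that the deployed policy \eqref{eq:controller_IMC} respects the information constraint \eqref{eq:distributed_constraint}.

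For the structural part, I would argue directly from the recursive deployment equations \eqref{eq:controller_IMC_1}--\eqref{eq:controller_IMC_2} together with the network sparsity of $\mathbf{F}$ encoded in \eqref{eq:sub-operator}. The key observation is that, writing the $i$-th block of \eqref{eq:controller_IMC_1}, $\widehat{w}^{[i]}_t = x^{[i]}_t - f^{[i]}_t(x^{[\mathcal{N}_i]}_{t-1:0},u^{[i]}_{t-1:0})$, so the local disturbance estimate $\widehat{w}^{[i]}_{t}$ depends only on the local state history $x^{[i]}_{t:0}$, the neighbors' state histories $x^{[j]}_{t-1:0}$ for $j\in\mathcal{N}_i$, and the local past inputs $u^{[i]}_{t-1:0}$. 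Then $u^{[i]}_t = \Emme^{[i]}_t(\widehat{w}^{[i]}_{t:0})$ by decentralization of $\Emme$. I would close the argument by induction on $t$: assuming that for all $\tau<t$ each $u^{[i]}_\tau$ is a function of $x^{[\mathcal{N}_i]}_{\tau:0}$ only, the formula for $\widehat{w}^{[i]}_{t:0}$ shows it is a function of $x^{[\mathcal{N}_i]}_{t:0}$ and $u^{[i]}_{t-1:0}$, and substituting the inductive hypothesis eliminates the inputs in favor of $x^{[\mathcal{N}_i]}_{t-1:0}$ (note $\mathcal{N}_i\subseteq\mathcal{N}_i$, so neighbors-of-neighbors do not appear — each $u^{[i]}_\tau$ only brought in $x^{[\mathcal{N}_i]}$, not $x^{[\mathcal{N}_j]}$). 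Hence $u^{[i]}_t$ is a causal function of $x^{[\mathcal{N}_i]}_{t:0}$, i.e. there exists a causal $\mathbf{K}^{[i]}$ with $\mathbf{u}^{[i]}=\mathbf{K}^{[i]}(\mathbf{x}^{[\mathcal{N}_i]})$, which is exactly \eqref{eq:distributed_constraint}.

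The main subtlety — the step I would be most careful about — is precisely that the recursion for $u^{[i]}_t$ does \emph{not} propagate the dependence outward through the graph. Because $\widehat{w}^{[i]}_t$ involves only $f^{[i]}$, which by the coupling structure reads states from $\mathcal{N}_i$, and because $\Emme^{[i]}$ is decentralized (acting on $\widehat{w}^{[i]}$ alone, not on $\widehat{w}^{[\mathcal{N}_i]}$), the recursion stays confined to agent $i$'s own variables and its neighbors' states; it never needs $x^{[j]}$ for $j\notin\mathcal{N}_i$. It is worth contrasting this with the weaker \emph{distributed} (as opposed to \emph{decentralized}) case sketched before the proposition, where $\Emme^{[i]}$ would act on $\widehat{w}^{[\mathcal{N}_i]}$ and reconstructing $\widehat{w}^{[j]}$ for a neighbor $j$ would require reading $x^{[\mathcal{N}_j]}$, enlarging the information set to two-hop neighborhoods; the proposition as stated avoids this by the decentralization hypothesis. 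I would therefore make the induction hypothesis sharp (dependence on $x^{[\mathcal{N}_i]}_{t:0}$, nothing more) so that this non-propagation is visible at each step, and conclude that stacking the $\mathbf{K}^{[i]}$ recovers the global $\mathbf{K}$ of Theorem~\ref{th:result_IMC} with the desired sparsity.
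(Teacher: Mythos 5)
Your proposal is correct and follows essentially the same route as the paper: stability is delegated to statement~1) of Theorem~\ref{th:result_IMC}, and the distributed structure follows from the block form $\widehat{\mathbf{w}}^{[i]} = \mathbf{x}^{[i]}-\mathbf{F}^{[i]}(\mathbf{x}^{[\mathcal{N}_i]},\mathbf{u}^{[i]})$ together with the decentralization of $\bmcalM$. Your explicit induction on $t$ merely formalizes the recursive elimination of the local past inputs $u^{[i]}_{t-1:0}$, which the paper's proof leaves implicit in the phrase ``agent $i$ only needs measurements of the neighboring states and local past inputs.''
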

\begin{proof}
    Since $\Emme \in \mathcal{L}_p$, the closed-loop system is $\ell_p$-stable by Theorem~\ref{th:result_IMC}. 
    By \eqref{eq:sub-operator}, we have $\hatbf{w}^{[i]} = \mathbf{x}^{[i]}-\mathbf{F}^{[i]}(\mathbf{x}^{[\mathcal{N}_i]},\mathbf{u}^{[i]})$. Hence, agent $i$ only needs measurements of the neighboring states according to $\mathcal{G}$ and local past inputs, thus complying with \eqref{eq:distributed_constraint}.
\end{proof}

The result of Proposition~\ref{prop:distributed} can be extended to more complex cases. First, one can use local 
operators $\Emme^{[i]} \in \mathcal{L}_p$ that, besides $\hatbf{w}^{[i]}$, have access to disturbance reconstructions $\hatbf{w}^{[j]}$ or control variables $\ub^{[j]}$ computed at locations $j\neq i$. 
While these architectures can be beneficial, e.g. for counteracting disturbances affecting other subsystems before they propagate to the subsystem $i$ through coupling, they require additional communication channels $\{i,j\}$ if $j\not\in \mathcal{N}_i$. Moreover, one has to use local operators $\Emme^{[i]}$ guaranteeing that the whole operator $\Emme$ belongs to $\mathcal{L}_p$. 
To this purpose, in general, it is not enough that $\Emme^{[i]} \in \mathcal{L}_p$ because the dependency on $\hatbf{w}^{[j]}$ and $\mathbf{u}^{[j]}$ for $j\neq i$ can induce loop interconnections that can destabilize the closed-loop system. Classes of local operators  $\Emme^{[i]}$ yielding $\Emme \in \mathcal{L}_p$  have been proposed in \cite{massai2023unconstrained,saccani2024optimal} by using dissipativity theory.
\begin{figure}
	\centering
    \includegraphics[width=0.99\linewidth]{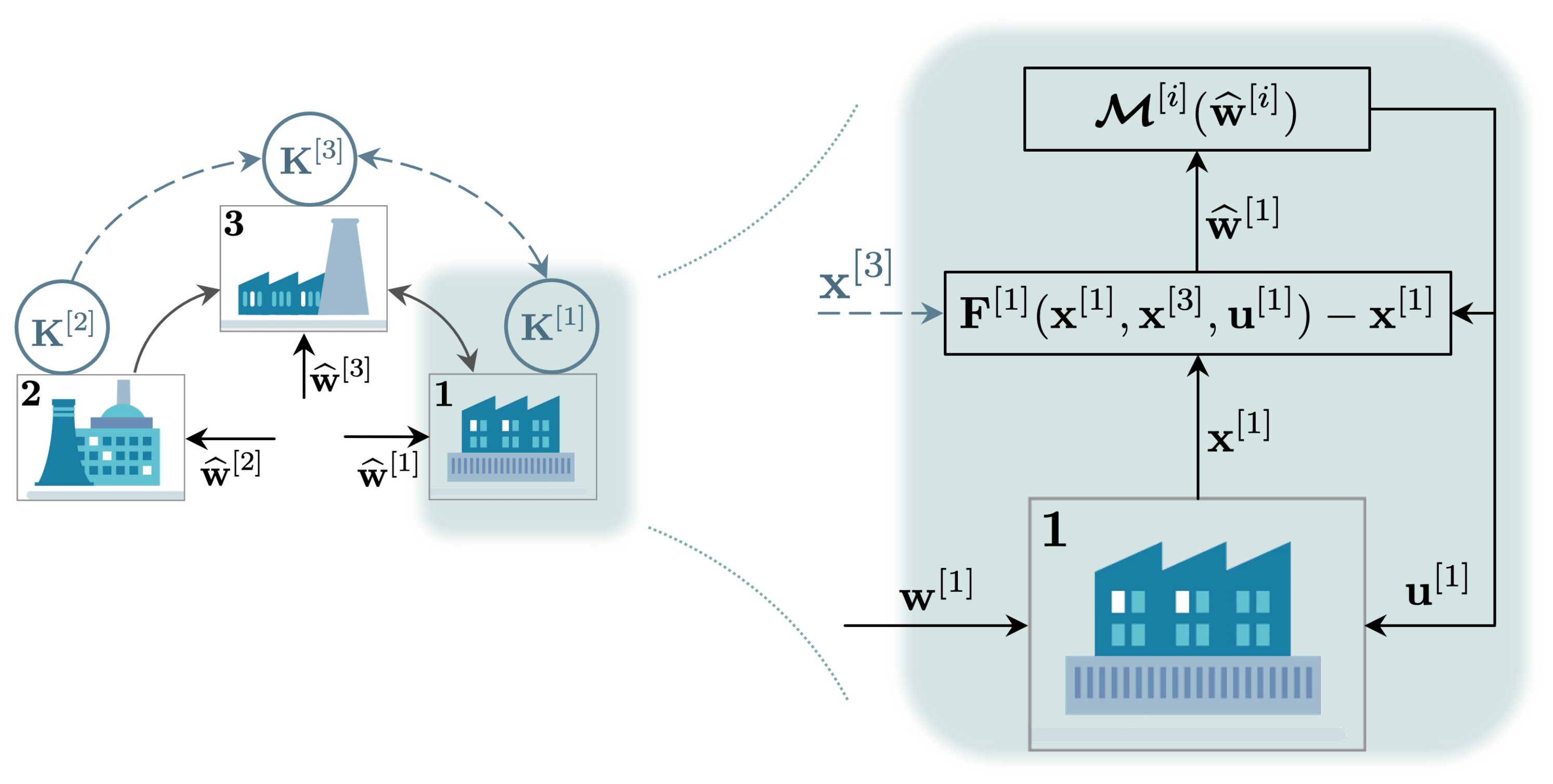}
	\caption{Example of networked dynamics \eqref{eq:sub-operator} and decentralized IMC controller for agent $i=1$.}
	\label{fig:distributed_scheme}
\end{figure}

\section{Learning to Boost Performance using Unconstrained Optimization}
\label{sec:implementation}

Leveraging the theoretical results of previous sections,  we reformulate the performance-boosting problem in a form that facilitates optimizing by automatic differentiation and unconstrained gradient descent. This enables the use of highly flexible cost functions for complex nonlinear optimal control tasks. By design, the proposed approach guarantees closed-loop stability throughout the optimization process. We assess the effectiveness of the proposed methodology in achieving optimal performance through numerical experiments, in Section~\ref{sec:numerical}.

\subsection{IMC-based reformulation of performance boosting}
\label{subsec:reformulation}
The main value of Theorem~\ref{th:result_IMC} is that it enables reformulating Problem~\ref{prob:boosting} as follows.

\medskip

\noindent\textbf{IMC reformulation of the performance-boosting problem:}
\begin{subequations}
\label{eq:Prob_reform}
\begin{alignat}{3}
	&\min_{\Emme \in \mathcal{L}_p}&& \qquad \mathbb{E}_{w_{T:0}}\left[L(x_{T:0},u_{T:0})\right] \label{eq:Prob_reform_1}\\
	&\operatorname{s.t.}~~&&x_t = f_t(x_{t-1:0},u_{t-1:0})+w_t, \quad x_0=w_0, \label{eq:Prob_reform_2}\\
    &~~&&u_t = \mathcal{M}_t({w}_{t:0})\,, \quad t=1,2,\ldots \,.\label{eq:Prob_reform_4}
\end{alignat}
\end{subequations}
Indeed, \eqref{eq:input_M} corresponds to \eqref{eq:Prob_reform_2}-\eqref{eq:Prob_reform_4}. If the exact dynamics $f_t$ in \eqref{eq:Prob_reform_2} is not known, it must be simply replaced by the nominal model $\widehat{f}_t$.

The reformulation \eqref{eq:Prob_reform} offers significant computational advantages as compared to 
Problem~\ref{prob:boosting}. In the classical linear quadratic case,%
\footnote{That is, when $f_t$ and $\Emme$ are linear and $L$ is quadratic positive definite.}
\eqref{eq:Prob_reform} becomes strongly convex in $\Emme$ --- enabling to use efficient convex optimization for finding a globally optimal solution \cite{youla1976modern,wang2019system, furieri2019input,zheng2022system,fisher2023approximation}. 
In the general nonlinear case, searching over nonlinear operators $\Emme \in \mathcal{L}_p$ remains significantly easier than tackling Problem~\ref{prob:boosting} directly. 
Indeed, the set $\mathcal{K}$ of controllers $\mathbf{K}(\cdot)$ complying with \eqref{NOC:stab} is, in general, difficult to parametrize. This is mainly because, given two stabilizing policies $\mathbf{K}_1, \mathbf{K}_2$, their convex combinations $\mathbf{K}_3=\gamma\mathbf{K}_1+(1-\gamma)\mathbf{K}_2$ with $\gamma \in [0,1]$ and their cascaded composition $\mathbf{K}_4 = \mathbf{K}_2(\bm{\Phi}^{\xb}[\bm{F},\mathbf{K}_1])$ do not result in stabilizing policies, in general; these issues are very well-known for the special case of linear systems \cite{fazel2018global,youla1976modern}. Hence, it is difficult to parameterize stabilizing policies, for instance, by composing or summing together base stabilizing operators. 
Instead, thanks to $\mathcal{L}_p$ being convex and closed under composition, there exist methods for parametrizing rich subsets of $\mathcal{L}_p$ through free parameters $\theta\in\mathbb{R}^d$, \textcolor{black}{where $d \in \mathbb{N}$ is the number of scalar parameters}, that is, to define operators $\Emme(\theta)$ such that 
\begin{equation}
    \label{eq:M:stable}
    \Emme(\theta) \in \mathcal{L}_p, \quad \forall \theta \in \mathbb{R}^d\,.
\end{equation}
This allows turning \eqref{eq:Prob_reform} into an unconstrained optimization problem over~$\theta \in \mathbb{R}^d$.

The last issue to be addressed is the computation of the average in \eqref{eq:Prob_reform_1} that, as noticed before, is generally intractable. This is usually circumvented by approximating the exact average with its empirical counterpart obtained using a set of samples $\{w_{T:0}^s\}_{s=1}^S$ drawn from the distribution $\mathcal{D}_{T:0}$. 
One then obtains the finite-dimensional optimization problem:
\begin{subequations}
\label{eq:learning}
\begin{alignat}{3}
	 \quad &\min_{\theta \in \mathbb{R}^d}&& \frac{1}{S}\sum_{s=1}^SL(x_{T:0}^s,u_{T:0}^s)\label{eq:learning_1}\\
	&\operatorname{s.t.}~~ && x_{t}^s = f_t(x_{t-1:0}^s,u_{t-1:0}^s)+w_t^s\,, ~~w_0^s = x_0^s\,,\label{eq:learning_2}\\
    &~~&&u_t^{\textcolor{black}{s}} = \mathcal{M}_t(\theta)(w_{t:0}^s)\,, \quad t=0,1,2,\ldots \,,\label{eq:learning_4}
	\end{alignat}
\end{subequations}
where $x_{T:0}^s$ and $u_{T:0}^s$ are the inputs and states obtained when the disturbance $w_{T:0}^s$ is applied. While in this work we only consider the empirical cost in the optimization problem~\eqref{eq:learning_1}, the closed-loop performance when faced with out-of-sample noise sequences is further investigated in~\cite{boroujeni2024pac}.

\textcolor{black}{Finally, we highlight that \eqref{eq:learning_2} and \eqref{eq:learning_4} can be seen as the equations of the layer $t$ of a neural network with $T$ layers. Specifically, we can interpret the layer $t$ of this neural network to have inputs $(x^s_{t-1:0},u^s_{t-1:0},w^s_{t:0})$ and outputs $(x^s_t,u^s_t)$.  Under this lens, the weights to be learned across all layers are the $\theta \in \mathbb{R}^d$ defining the control policy \eqref{eq:learning_4}.} 
When $\mathcal{M}_t$, for $t=0,1,\ldots$ is sufficiently smooth, the absence of constraints on $\theta$ enables the use of powerful packages, such as TensorFlow~\cite{tensorflow2015-whitepaper} and PyTorch~\cite{NEURIPS2019_9015}, leveraging
automatic differentiation and backpropagation for optimizing the controller through gradient descent.


\subsection{Free parameterizations of $\mathcal{L}_2$ subsets}
\label{subsec:RENs}

As highlighted in Section~\ref{sec:implementation}.\ref{subsec:reformulation}, the possibility of obtaining effective controllers by solving \eqref{eq:learning} critically depends on our ability to parametrize $\mathcal{L}_p$ operators. The main obstacle is that the space $\mathcal{L}_{p}$ is infinite-dimensional. Hence, for implementation, one usually restrict the search in subsets of $\mathcal{L}_{p}$ described by finitely many parameters. When linear systems are considered, 
one can search over Finite Impulse Response (FIR) transfer matrices $\mathbf{M} = \sum_{i=0}^N M[i]z^{-i} \in \mathcal{TF}_s$ 
and then optimize over the finitely many real matrices $M[i]$. Less and less conservative solutions can be obtained by increasing the FIR order $N$. However, the FIR approach limits the search to linear control policies. 

Recently, \cite{kim2018standard,revay2023recurrent,martinelli2023unconstrained} have proposed finite-dimensional  DNN approximations of \textcolor{black}{classes of} nonlinear $\mathcal{L}_{2}$ operators. In the sequel we briefly review the Recurrent Equilibrium Network (REN) models proposed in \cite{revay2023recurrent}.
An operator $\Emme:\ell^n\rightarrow \ell^m$ is a REN if the relationship $\mathbf{u} = \Emme(\hatbf{\wb})$ is recursively generated by the following dynamical system:
\begin{equation}
\label{eq:RENs}
        \begin{bmatrix}
		\xi_t \\  z_t \\ u_t
	\end{bmatrix}
	=
	\overbrace{
	\begin{bmatrix}
		A_{1} & B_1 & B_2 \\
		C_1 & D_{11} & D_{12} \\
		C_2 & D_{21} & D_{22}
	\end{bmatrix}
	}^{W}
	\begin{bmatrix}
		\xi_{t-1} \\
		\sigma(z_t)\\
		w_t
	\end{bmatrix}+
 	\overbrace{\begin{bmatrix}
		b_{x,t} \\
		b_{z,t}\\
		b_{w,t}
	\end{bmatrix}}^{b_t}
 \,,\quad \xi_{-1} = 0\,,
\end{equation}
where $\xi_t \in \mathbb{R}^{q}$, $v_t \in \mathbb{R}^r$, $b_{x,t},b_{z,t},b_{w,t}\in\ell_\infty$%
\footnote{This is slightly different from the original REN model, where these signals \cite{revay2023recurrent} are assumed to be constant.} 
and $\sigma:\mathbb{R} \rightarrow \mathbb{R}$ --- the activation function --- is applied element-wise. 
Further, $\sigma(\cdot)$ must be piecewise differentiable and with first derivatives restricted to the interval $[0,1]$. 
As noted in \cite{revay2023recurrent}, RENs subsume many existing DNN architectures. In general, RENs define \textit{deep equilibrium network} models \cite{bai2019deep} due to the implicit relationships defining $z_t$ in the second block row of \eqref{eq:RENs}. 
By restricting $D_{11}$ to be strictly lower-triangular, the value of $z_t$ can be computed explicitly, thus significantly speeding-up computations \cite{revay2023recurrent}. 
To give an example of the expressivity of \eqref{eq:RENs}, by suitably choosing the size and zero pattern of matrices in \eqref{eq:RENs}, RENs can provide nonlinear systems in the form
\begin{align*}
&  \xi_{t} =\hat A \xi_{t-1}+\hat B \,\text{NN}^{\xi} (\xi_{t-1},\widehat{w}_{t}) \\  
&u_t= \hat C\xi_t + \hat D \,\text{NN}^{u}(\xi_{t-1},\widehat{w}_{t})
\end{align*}
where $\hat A$, $\hat B$, $\hat C$, $\hat D$ are arbitrary matrices of suitable dimensions and $NN^\star$, $\star\in\{\xi,u\}$, are neural networks of depth $L$ given by the relations
\begin{align*}
&\tilde z_{0,t}^\star=[\xi_{t-1}^\top,\hat w_t^\top]^\top, \\
& \tilde z_{k+1,t}^\star=\sigma(W_k^\star \tilde z_{k,t}^\star+b_k^\star),\quad k=0,\ldots L-1
\end{align*}
where $W_k^\star$ and $b_k^\star$ are the layer weights and biases, respectively, and $\tilde z_{L,t}^\star$ is the NN output. 


For an arbitrary choice of $W$ and $b_t$, the map $\Emme$ induced by \eqref{eq:RENs} may not lie in $\mathcal{L}_2$. The work \cite{revay2023recurrent} provides an explicit smooth mapping $\Theta:\mathbb{R}^d \rightarrow \mathbb{R}^{(q+r+m) \times (q+r+n)}$ from unconstrained training parameters $\theta \in \mathbb{R}^d$ to a matrix $W=\Theta(\theta) \in \mathbb{R}^{(q+r+m) \times (q+r+n)}$ defining \eqref{eq:RENs}, with the property that the corresponding operator $\Emme(\theta)$ lies in $ \mathcal{L}_2$ by design when $b_t=0$.%
\footnote{Furthermore, RENs enjoy contractivity — although the theoretical results of this paper do not rely on this property.}
This approach can be easily generalized by including vectors $b_t$, $t=1,\ldots,T$ in the set of trainable parameters and assuming $b_t=0$ for $t>T$.

Recently, free parameterizations of continuous-time $\mathcal{L}_2$ operators through RENs and port-Hamiltonian systems have been also proposed in \cite{martinelli2023unconstrained} and \cite{zakwanNeuralDistributedControllers2024}, respectively.
\begin{remark}
The work \cite{wang2022learning} proves that RENs in the form \eqref{eq:RENs} are universal approximators of all contracting and Lipschitz operators when the parameters $(W,b)$ do not vary with time. To the best of the authors' knowledge, it is still unknown if the class of RENs in \(\mathcal{L}_2\), parametrized by \(W = \Theta(\theta)\) where \(W \in \mathbb{R}^{(q+r+m) \times (q+r+n)}\), can approximate any operator in \(\mathcal{L}_2\) arbitrarily well.
Our work motivates future research efforts to discover new parametrizations of operators in $\mathcal{L}_p$ with stronger and provable approximation capabilities.
\end{remark}

To conclude, we clarify that RENs can be directly embedded into the performance-boosting optimization problem \eqref{eq:learning_1}-\eqref{eq:learning_4}. This is obtained by substituting the input equation \eqref{eq:learning_4} with the recursions \eqref{eq:RENs}, where $W=\Theta(\theta)$ according to the mapping proposed in \cite{revay2023recurrent}.

\section{Numerical Experiments: the Magic of the Cost}
\label{sec:numerical}
In this section, we test the flexibility of performance boosting by considering cooperative robotics problems. Firstly, we validate the \textcolor{black}{guarantees} of the design approach by showing that closed-loop stability is preserved during and after training --- both when the system model is known and when it is uncertain. Secondly, we exploit the freedom in selecting the cost $L(x_{T:0},u_{T:0})$
to include appropriate terms aimed at promoting complex closed-loop behaviors. 

In all the examples, we consider two point-mass vehicles, each with position $p_t^{[i]} \in \mathbb{R}^2$ and velocity $q_t^{[i]} \in \mathbb{R}^2$, for $i=1,2$, subject to nonlinear drag forces (e.g., air or water resistance). The discrete-time model for vehicle $i$ is 
\begin{equation}
\label{eq:mechanical_system}
    \begin{bmatrix}
        p_{t}^{[i]}\\
        q_{t}^{[i]}
    \end{bmatrix}
    = 
    \begin{bmatrix}
        p_{t-1}^{[i]}\\
        q_{t-1}^{[i]}
    \end{bmatrix} 
    + T_s
    \begin{bmatrix}
        q_{t-1}^{[i]}\\
        (m^{[i]})^{-1}\left(-C(q_{t-1}^{[i]}) + F_{t-1}^{[i]}\right)\end{bmatrix}\,,
\end{equation}
where $m^{[i]}>0$ is the mass, $F_t^{[i]} \in \mathbb{R}^2$ denotes the force control input, $T_s>0$ is the sampling time and $C^{[i]}:\mathbb{R}^2\rightarrow \mathbb{R}^2$ is a \emph{drag function} given by
$C^{[i]}(s) = b_1^{[i]} s - b_2^{[i]} \tanh(s)$, for some $0<b_2^{[i]}<b_1^{[i]}$.
Each vehicle 
must reach a target position $\overline{p}^{[i]}\in \mathbb{R}^2$ with zero velocity in a stable way. This elementary goal can be achieved by using a base proportional controller
\begin{equation}
\label{eq:pre_controller}
    {F'}_{t}^{[i]} = {K'}^{[i]}(\bar p^{[i]}-p_{t}^{[i]})\,,
\end{equation}
with $K'^{[i]} = \operatorname{diag}(k_1^{[i]},k_2^{[i]})$ and $k_1^{[i]},k_2^{[i]}>0$. The overall dynamics $f_t(x_{t-1:0},u_{t-1:0})$ in \eqref{eq:system} is given by \eqref{eq:mechanical_system}-\eqref{eq:pre_controller} with 
\begin{equation}
\label{eq:control_input_additional}
F^{[i]}_t = F'^{[i]}_t + u_t^{[i]}\,,    
\end{equation}
where $x_t = (p_t^{[1]},q_t^{[1]},p_t^{[2]},q_t^{[2]})$ and 
$u_t = (u_t^{[1]},u_t^{[2]})$ is a performance-boosting control input to be designed.  
As per~\eqref{eq:system}, we consider additive disturbances affecting the system dynamics.
Thanks to the use of the prestabilizing controller \eqref{eq:pre_controller}, one can show that $\Effe(\bmu,\bmw)\in \mathcal{L}_2$.


The goal of the performance-boosting policy is to enforce additional desired behaviors, on top of stability, which are specified in each of the following subsections.  In all cases,  we parametrize the operator $\Emme(\theta) \in \mathcal{L}_2$ as a REN, see \eqref{eq:RENs}. 
Appendix~\ref{app:implementation} presents all the implementation details, such as parameter values and exact definitions of the cost functions.
Appendix~\ref{app:comparison} compares the performance of our methods and the corresponding guarantees with two related baseline approaches.
The code to reproduce our examples as well as 
various movies are available in our \href{https://github.com/DecodEPFL/performance-boosting_controllers.git}{Github repository}.\footnote{\url{https://github.com/DecodEPFL/performance-boosting_controllers.git}}

\subsection{Robust stability preservation during optimization}
\label{sec:example_stab}

We consider the scenario \texttt{mountains} in Figure~\ref{fig:corridor} where
each vehicle must reach the target position in a stable way while avoiding collisions between themselves and with two grey obstacles. 
Each agent is represented with a circle that indicates its radius for the collision avoidance specifications.
When using the base controller \eqref{eq:pre_controller}, the vehicles successfully achieve the target, however, they do so with poor performance since collisions are not avoided, as shown in Figure~\ref{fig:corridor}(a).


We select a loss $L(x_{T:0},u_{T:0})$ as the sum of stage costs $l(x_t,u_t)$, that is, $L(x_{T:0},u_{T:0}) = \sum_{t=0}^Tl(x_t,u_t)$ with
\begin{equation}\label{eq:loss_CA}
	l(x_t,u_t)= l_{traj}(x_t,u_t) + l_{ca}(x_t) + l_{obs}(x_t)\,,
\end{equation}
where 
$l_{traj}(x_t,u_t) 
= 
\begin{bmatrix} x_t^\mathsf{T} & u_t^\mathsf{T}\end{bmatrix} 
Q 
\begin{bmatrix} x_t^\mathsf{T} & u_t^\mathsf{T} \end{bmatrix}^\mathsf{T}$ 
with 
$Q\succeq 0$ penalizes the distance of agents from their targets and the control energy, $l_{ca}(x_t)$ and $l_{obs}(x_t)$ penalize collisions between agents and with obstacles, respectively. 

In order to train the performance-boosting controller, we solve~\eqref{eq:learning}, using a REN~\eqref{eq:RENs} of dimension $q= r = 8$.
The training data consists of a set of 100 initial positions, i.e.,
we set $w_0=((p^x_0)^{[1]},(p^y_0)^{[1]},0,0,(p^x_0)^{[2]},(p^y_0)^{[2]},0,0)$ and $w_t=0$, for $t>0$, where $p^x$ and $p^y$ denote the $x$ and $y$ coordinates of the vehicles in the Cartesian plane, respectively.
Initial positions are sampled from a Gaussian distribution around the nominal initial condition. 
Figure~\ref{fig:corridor}(b-c) shows the nominal and training initial conditions marked with `$\times$'  and `$\circ$', respectively, and three test trajectories after the training of the IMC controller. 
The trained control policies avoid collisions and achieve optimized trajectories thanks to minimizing \eqref{eq:loss_CA}.


\begin{figure}
	\centering
	\begin{minipage}{0.32\linewidth}
	    \includegraphics[width=\linewidth]{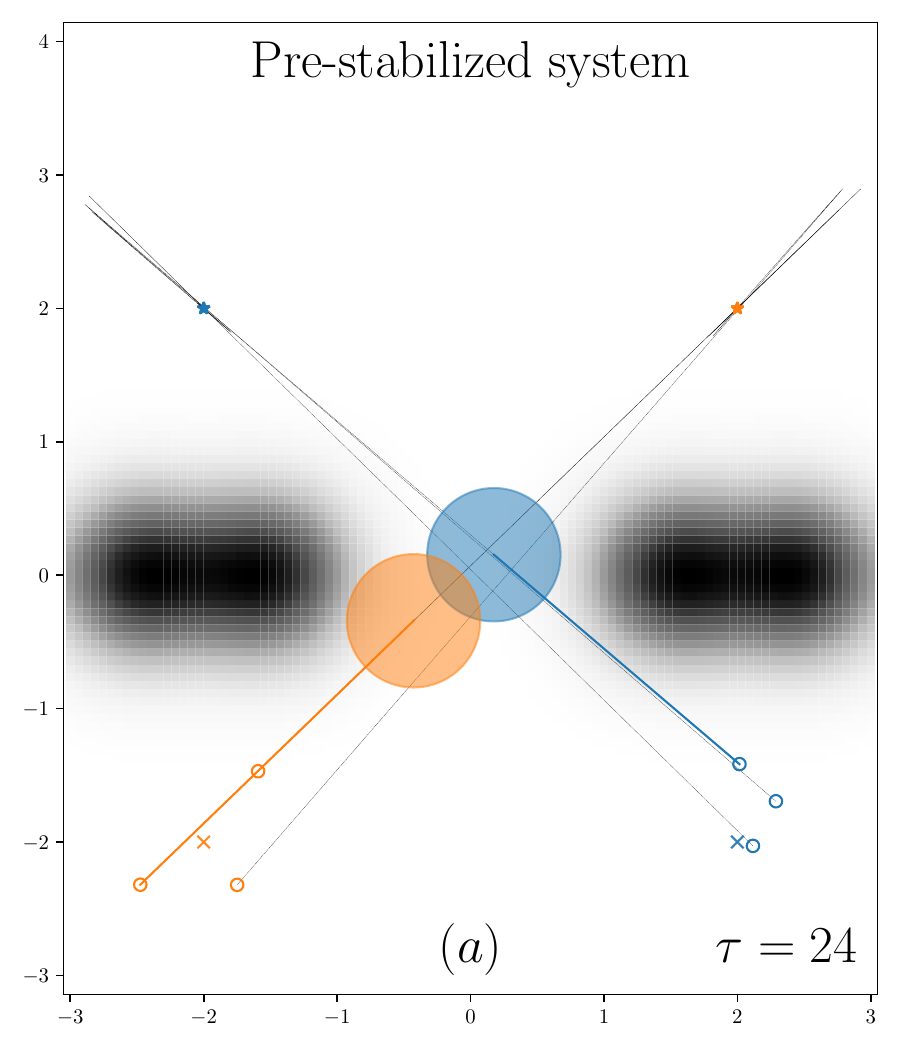}
	\end{minipage}%
	\begin{minipage}{0.32\linewidth}
	    \includegraphics[width=\linewidth]{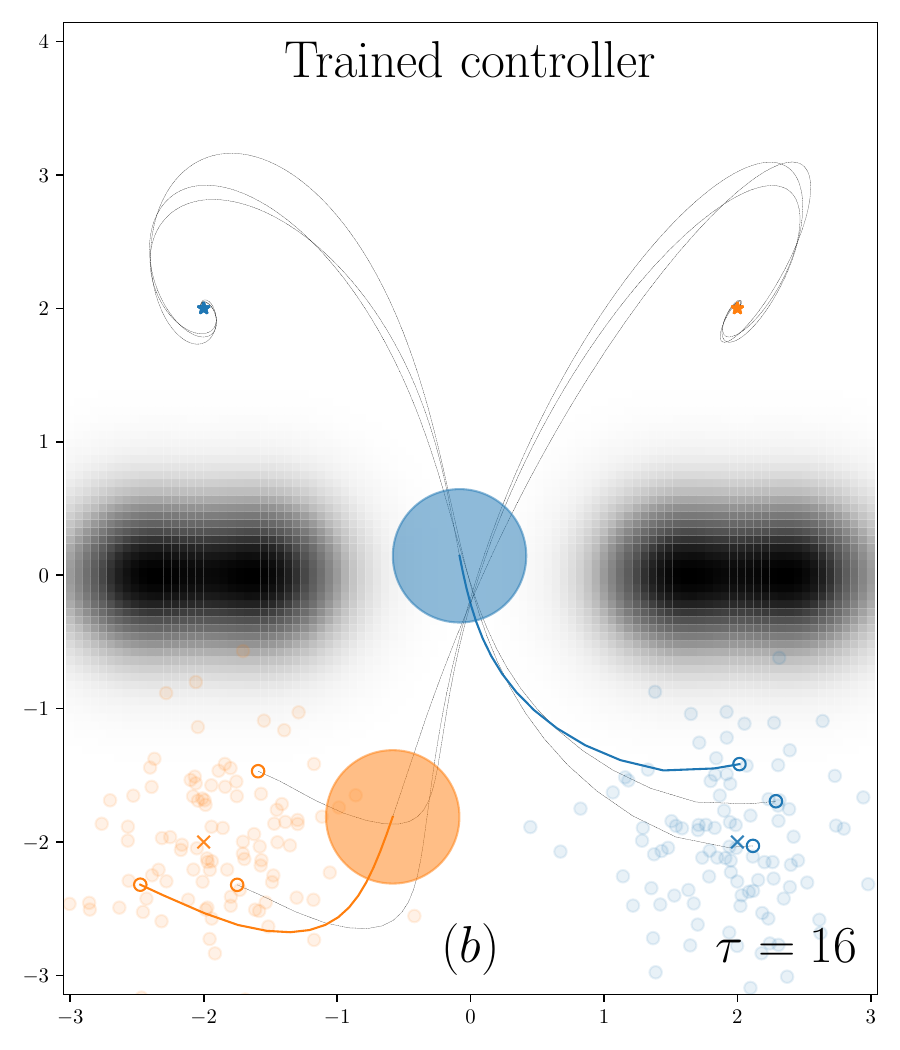}
	\end{minipage}%
	\begin{minipage}{0.32\linewidth}
	    \includegraphics[width=\linewidth]{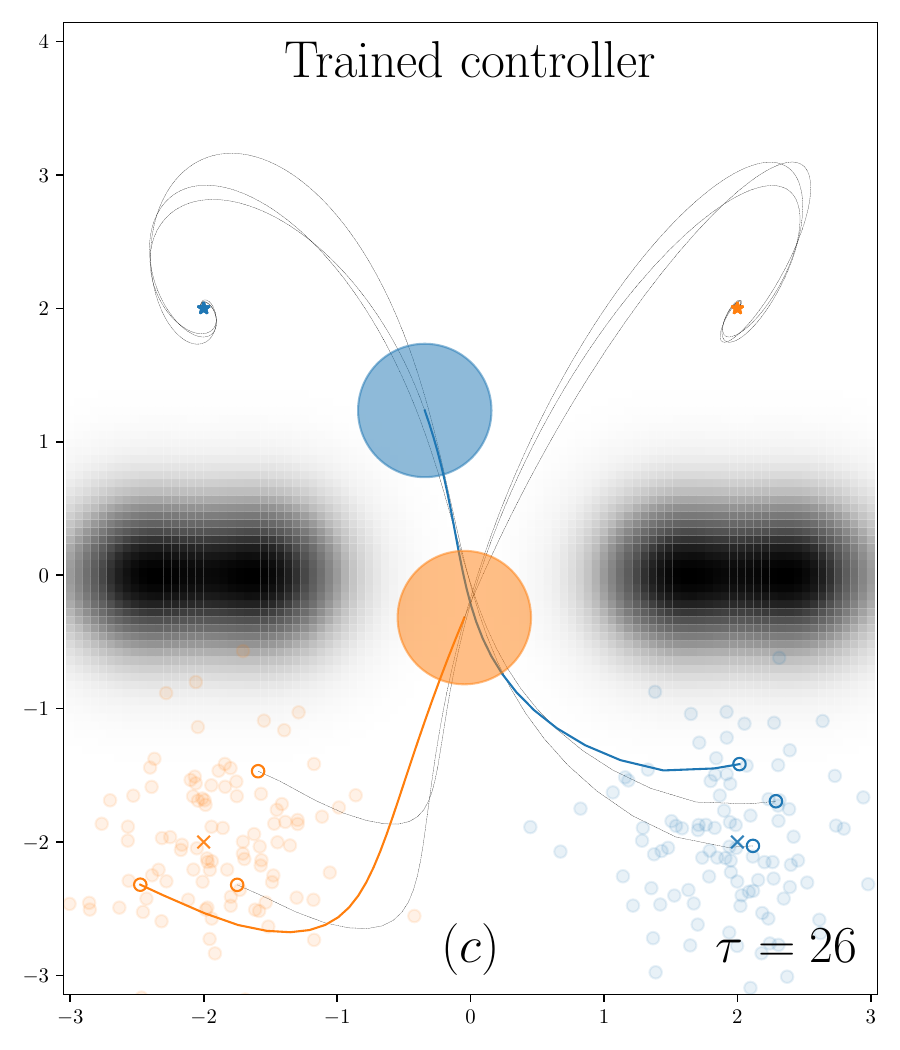}
	\end{minipage}
	\caption{\texttt{Mountains} --- Closed-loop trajectories before training (left) and after training (middle and right) over 100 randomly sampled initial conditions marked with $\circ$. Snapshots taken at time-instants $\tau$. Colored (gray) lines show the trajectories in $[0,\tau_i]$ ($[\tau_i,\infty)$). Colored balls (and their radius) represent the agents (and their size for collision avoidance).} 
	\label{fig:corridor}
\end{figure}

\subsubsection{Early stopping of the training}
We validate the \textcolor{black}{stability-by-design} property of our IMC control policies. 
We consider the scenario \texttt{mountains} as above but where the training process is interrupted before achieving a local minimum, as per the one in Figure~\ref{fig:corridor}.
In particular, we stop the optimization algorithm after 25\%, 50\%, and 75\% of the total number of epochs. The obtained trajectories are shown in Figure~\ref{fig:corridor_early_stopping}.
We observe that even if the performance is not optimized, closed-loop stability is always guaranteed.

\begin{figure}
	\centering
	\begin{minipage}{0.32\linewidth}
        \centering
	    \includegraphics[width=\linewidth]{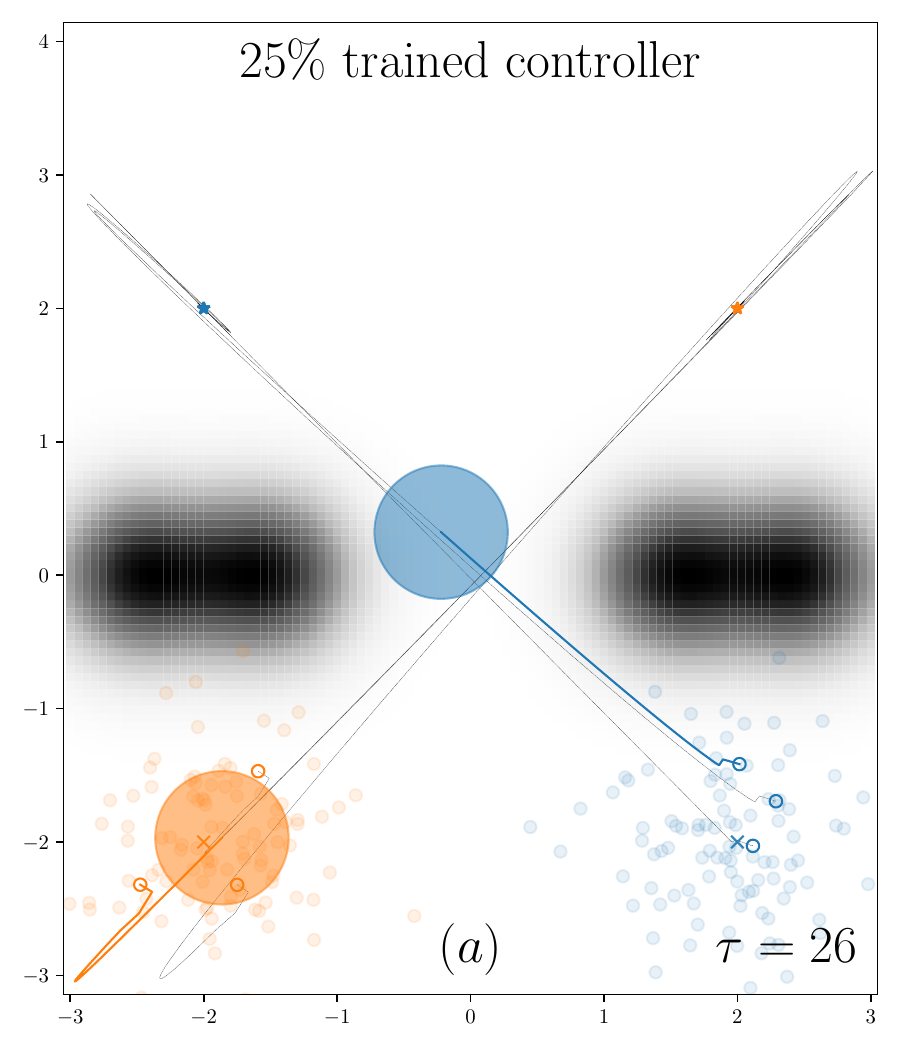}
	\end{minipage}%
	\begin{minipage}{0.32\linewidth}
	    \includegraphics[width=\linewidth]{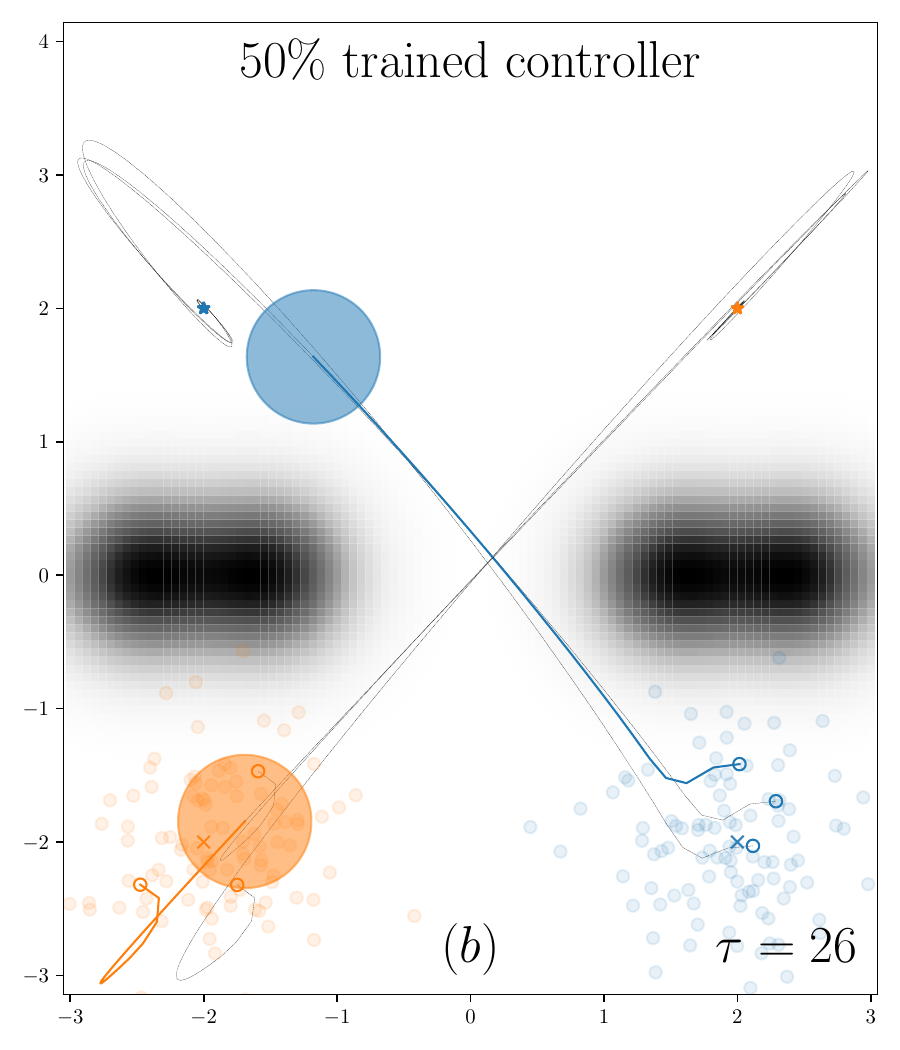}
	\end{minipage}%
	\begin{minipage}{0.32\linewidth}
	    \includegraphics[width=\linewidth]{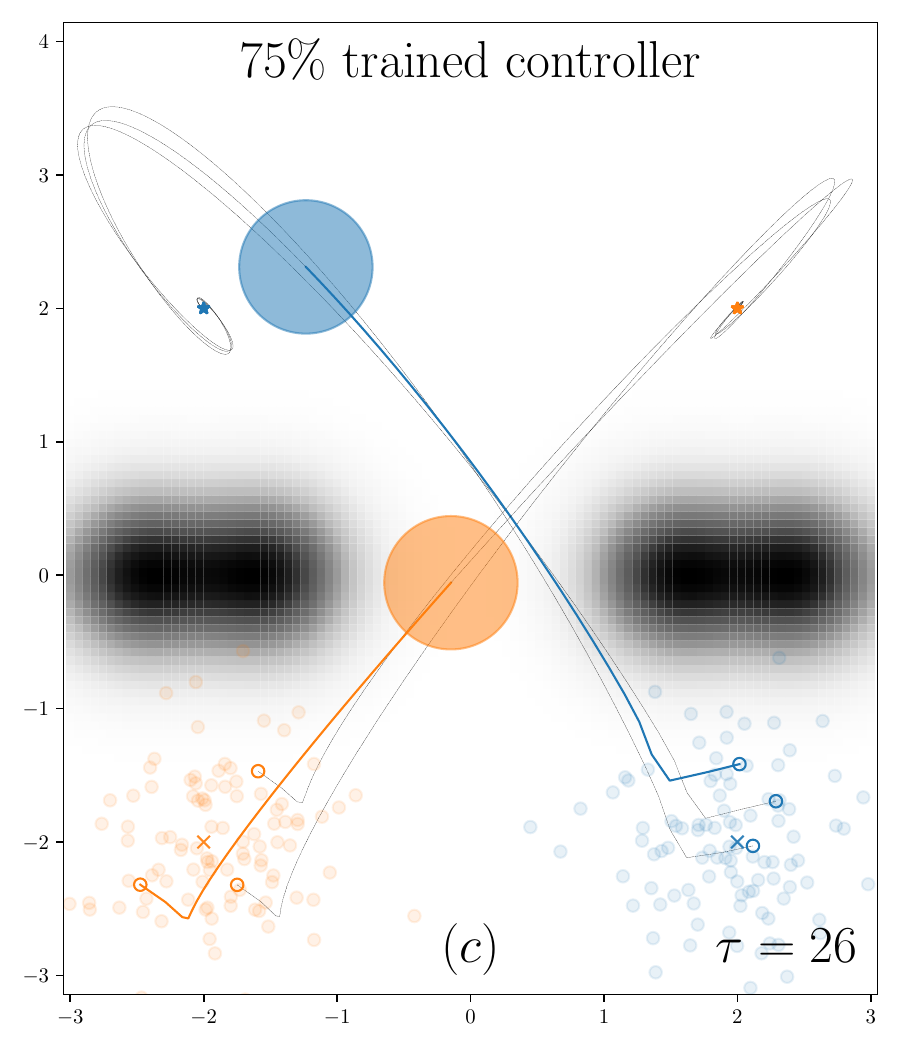}
	\end{minipage}
	\caption{\texttt{Mountains} --- Closed-loop trajectories after 25\%, 50\% and 75\% of the total training whose closed-loop trajectory is shown in Figure~\ref{fig:corridor}. Even if the performance can be further optimized, stability is always guaranteed.} 
	\label{fig:corridor_early_stopping}
\end{figure}

\subsubsection{Model mismatch}
We test our trained IMC controller when considering model mismatch on the system. In particular, we assume that the true vehicles have an incertitude over the mass of $\pm10\%$, and we apply IMC control policies embedding the nominal system with the nominal mass value. 
Figures~\ref{fig:corridor_modelMismatch_safe} (a-b) validate the robust $\ell_2$-stability of the closed-loop trajectories when the vehicles are lighter and heavier, respectively. Theorem~\ref{th:result_robust} suggests that, in this case, the gain of $\Emme$ may be sufficiently low to counteract the effect of model uncertainty. Note, however, that checking the sufficient condition \eqref{eq:condition_robustness} requires computing an upper bound on $\gamma(\bm{\Delta})$ --- a cumbersome task for general nonlinear systems.  Nonetheless, Theorem~\ref{th:result_robust} ensures that, in practical implementation, we can always reduce $\gamma(\Emme)$ enough to eventually meet \eqref{eq:condition_robustness}. 

\subsection{Boosting for safety and invariance certificates}
\label{sec:example_safe}
A challenging task in many control applications is to deal with stringent safety constraints on the state variables. 
Ideally, one would directly add the constraint that 
\begin{equation}
\label{eq:constraints_safety}
    x_t \in \mathcal{C}\,, \forall t=0,1,\ldots\,,
\end{equation}
in the IMC-based performance-boosting problem \eqref{eq:Prob_reform}, where 
$\mathcal{C}\subseteq \mathbb{R}^{n}$  defines a safety region. 
Unfortunately, \eqref{eq:constraints_safety} generally results in intractable constraints over $\Emme$. 
Indeed, it may be challenging to even verify that \eqref{eq:constraints_safety} holds for a certain $\Emme$ due to the infinite-horizon requirement and the involved nonlinearities.  
Many state-of-the-art approaches for guaranteeing safety hinge on either predictive safety filters \cite{hewing2020learning,wabersich2021predictive} or Control Barrier Functions (CBFs)~\cite{ames2019control, agrawal2017discrete}.
Safety filters are used during deployment: they override the control input 
$\mathbf{u} = \Emme(\hatbf{w})$ with a different (suboptimal) control variable when deemed necessary for guaranteeing safety. 
Instead, CBFs can be used for safety verification of a given policy, as they allow characterizing $\mathcal{C}$ as a forward invariant set based on a safety-set-defining function $h(x):\mathcal{X}\rightarrow\mathbb{R}$ satisfying $h(x)\geq 0$ for all $x\in\mathcal{C}$.  Certifying the forward invariance of $\mathcal{C}$ translates into determining if $h(x)$ is a CBF through verification of some safety conditions.%
\footnote{An exact definition of CBFs for the discrete-time can be found in%
~\cite{agrawal2017discrete}; for a more general discussion on CBFs we refer the reader to~\cite{ames2019control}.}
In particular, 
one can verify that, 
for any  $x_t\in\mathcal{C}$,
if there exists an input $u_t$ giving  $x_{t+1}$
such that it holds
\begin{equation}
    \label{eq:barrier_function}
    h(x_{t+1}) - h(x_t) + \gamma h(x_t) \geq 0 \,,
\end{equation}
where  $0<\gamma\leq 1$, then $h(x)$ is a CBF.
%
%

While optimizing over $\Emme$ such that \eqref{eq:constraints_safety} holds by design remains an open challenge, 
we aim to promote forward invariant sets
by shaping the cost to include soft safety specifications over a horizon of length $T$. In particular, the new cost term penalizes violations of~\eqref{eq:barrier_function} as per
\textit{
}
\begin{equation}
    \label{eq:loss_safe}
    L_{\operatorname{inv}} = \sum_{t=0}^{T-1} \operatorname{ReLU}\left(h(x_{t})-h(x_{t+1})+\gamma h(x_t)\right)\,.
\end{equation}

%
We consider the \texttt{mountains} scenario again and add the requirement that $(p_t^y)^{[i]}<(\bar{p}^y)^{[i]} + 0.1$
for each vehicle $i=1,2$ and every $t = 0,1,\ldots$, where $p_t^y$ denotes the $y$-coordinate of each center-of-mass position on the Cartesian plane. In other words, we only allow an overshoot of $0.1$ in the vertical direction with respect to the target position for each vehicle.
By defining $h(x_t) = \sum_{i=1}^2 ((\bar{p}^y)^{[i]} + 0.1 - (p_t^y)^{[i]})$ we add the term~\eqref{eq:loss_safe} to the loss function~\eqref{eq:learning_1}.
%
%
 Upon training without including $L_{\operatorname{inv}}$ in the cost, the masses violate the constraints, on average, on $67.49\%$ 
of the time over 100 runs --- typical trajectories are shown in Figure~\ref{fig:corridor}. The violation ratio is decreased to  $5.43\%$ when $L_{\operatorname{inv}}$ is included, as shown in Figure~\ref{fig:corridor_modelMismatch_safe}(c), where the gray area indicates the unsafe region to be avoided by the vehicles. Note that shaping the cost through $L_{\operatorname{inv}}$ is also beneficial if one implements an online safety filter such as \cite{hewing2020learning,wabersich2021predictive} during deployment. This is because penalizing $L_{\operatorname{inv}}$ drastically decreases constraint violations of the closed-loop system, and hence, the suboptimal online intervention of the safety filter would be much less frequent.


\begin{figure}
	\centering
	\begin{minipage}{0.32\linewidth}
	    \includegraphics[width=\linewidth]{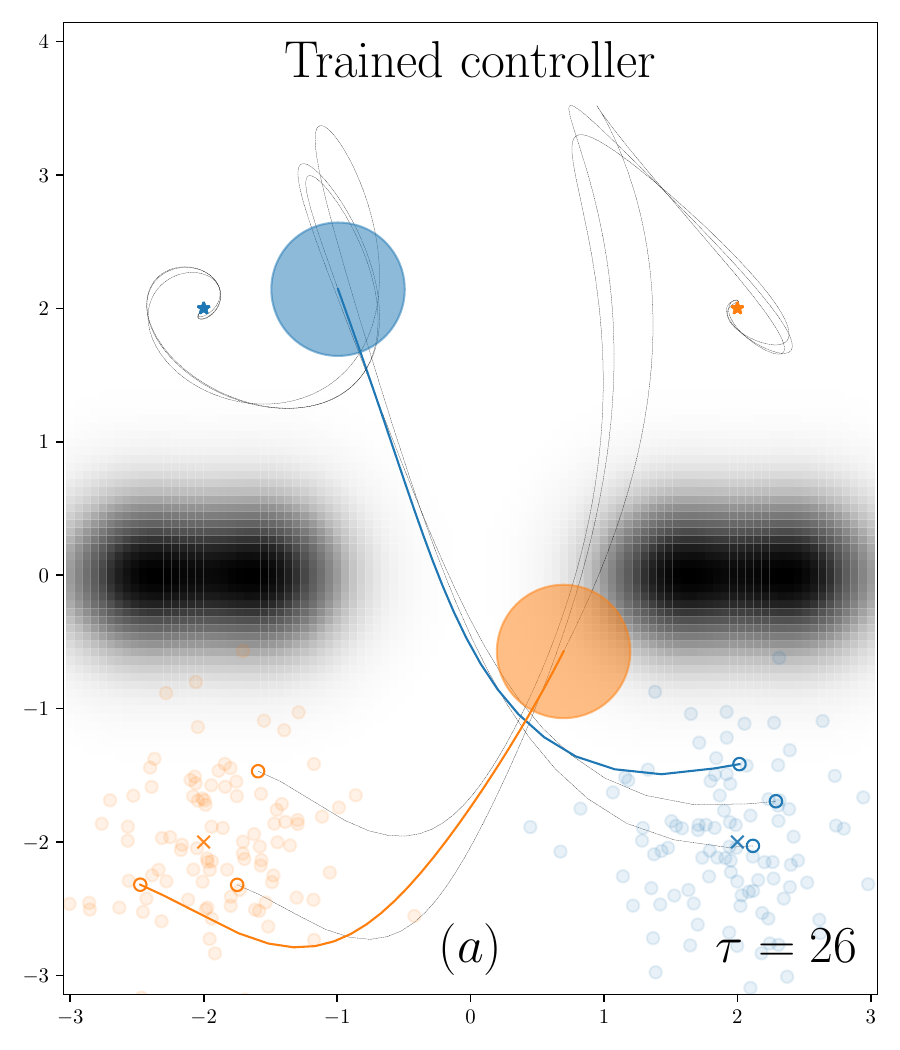}
	\end{minipage}%
	\begin{minipage}{0.32\linewidth}
	    \includegraphics[width=\linewidth]{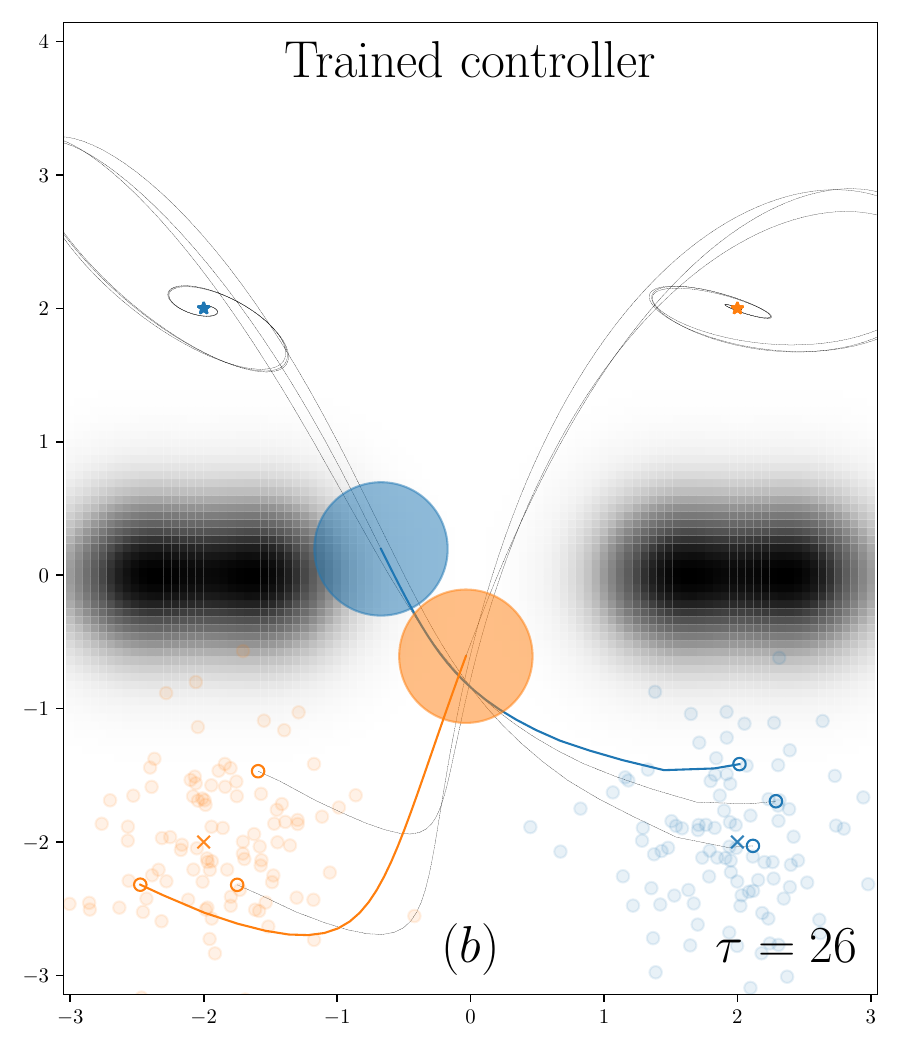}
	\end{minipage}%
    \begin{minipage}{0.32\linewidth}
	    \includegraphics[width=\linewidth]{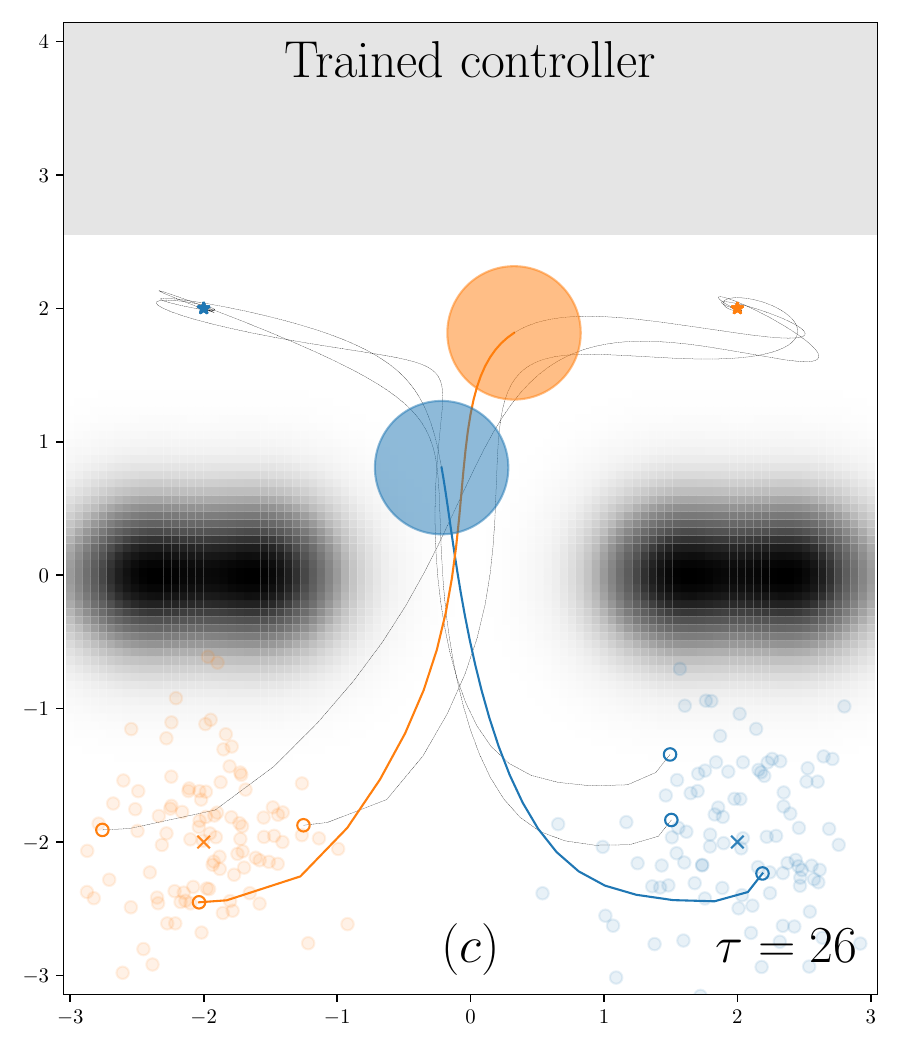}
	\end{minipage}
	\caption{\texttt{Mountains} --- Closed-loop trajectories after training. (Left and middle) Controller tested over a system with mass uncertainty (-10\% and +10\%, respectively).
    (Right) Trained controller with safety promotion through~\eqref{eq:loss_safe}. Training initial conditions marked with $\circ$. Snapshots taken at time-instants $\tau$. Colored (gray) lines show the trajectories in $[0,\tau_i]$ ($[\tau_i,\infty)$). Colored balls (and their radius) represent the agents (and their size for collision avoidance).} 
	\label{fig:corridor_modelMismatch_safe}
\end{figure}


\subsection{Boosting for temporal logic specifications}
The success of many policy learning algorithms, e.g., in RL, is highly dependent on the choice of the reward functions for capturing the desired behavior and constraints of an agent.
When tasks become complex, specifying loss functions that are the sum over time of stage costs can
be restrictive.
For instance, consider the case of an agent that must optimally 
visit a set of locations. 
A loss function composed of a stage-cost summed over time --- that is, the one considered in dynamic programming and classical optimal control \cite{mayne2000constrained,bertsekas2011dynamic}  --- cannot easily capture this task, as it would need a-priori information about the optimal timings to visit each location.
To overcome this problem, one could use more complex loss functions, as per those derived from temporal logic formulations.
In particular, truncated linear temporal logic (TLTL) 
is a specification language leveraging a set of operators defined over finite-time trajectories~\cite{li2017reinforcement,leung2023backpropagation}.
It allows incorporating domain knowledge, and constraints (in a soft fashion) into the learning process,
such as ``always avoid obstacles'', ``eventually visit location $a$'', or ``do not visit location $b$ until visiting location $a$''.
Then, using quantitative semantics one can automatically transform TLTL formulae into real-valued loss functions that are compositions of $\min$ and $\max$ functions over a finite period of time~\cite{li2017reinforcement, leung2023backpropagation}. 

To test the efficacy of TLTL specifications for shaping complex stable closed-loop behavior, we consider the scenario \texttt{waypoint-tracking}, shown in Figure~\ref{fig:waypoints}, where the two vehicles have to visit a sequence of waypoints while avoiding collisions between them and the gray obstacles. 
The \textit{blue} vehicle's goal is to visit
$g_{b}$, then $g_{a}$ and then $g_{c}$, 
while the goal for the \textit{orange} vehicle is to visit the waypoints in the following order:
$g_{c}$, $g_{b}$ and $g_{a}$.
Following~\cite{li2017reinforcement}, the loss formulation for the \textit{orange} agent is translated into plain English as 
``\textit{%
Visit $g_c$ then $g_b$ then $g_a$;
and don't visit $g_b$ or $g_a$ until visiting $g_c$;
and don't visit $g_a$ until visiting $g_b$;
and if visited $g_c$, don't visit $g_c$ again;
and if visited $g_b$, don't visit $g_b$ again;
and always avoid obstacles;
and always avoid collisions;
and eventually 
state at the final goal.%
}''
Its mathematical formulation can be found in Appendix~\ref{app:implementation}.\ref{subsec:ap:implementation_wp}.

Figure~\ref{fig:waypoints} shows the \texttt{waypoint-tracking} scenario before and after the training of a performance-boosting controller.
As described in Section~\ref{sec:implementation}.\ref{subsec:RENs}, we use a REN with $q=r=32$ for approximating the $\mathcal{L}_2$ operator $\Emme$. 
Furthermore, we allow for a time-varying bias of the form 
$b_t^\top  = \begin{bmatrix} 0_{1\times q} & 0_{1\times r} & b_{w,t}^\top \end{bmatrix}$, in~\eqref{eq:RENs}, with  $b_{w,t}=0$ for $t>T$.
While the system always starts at the same initial condition indicated with `$\circ$', the data consists of disturbance sequences $w_{T:0}$ with fixed $w_0$ and $w_{T:1}$ as i.i.d. samples drawn from a Gaussian distribution with zero mean and standard deviation of $0.01$.
Our result highlights the power of complex costs --- expressed through the TLTL loss function --- which promotes vehicles visiting the predefined waypoints in the correct order while avoiding collisions between them and with the obstacles.

\begin{figure}
	\centering
	\begin{minipage}{0.32\linewidth}
	    \includegraphics[width=\linewidth]{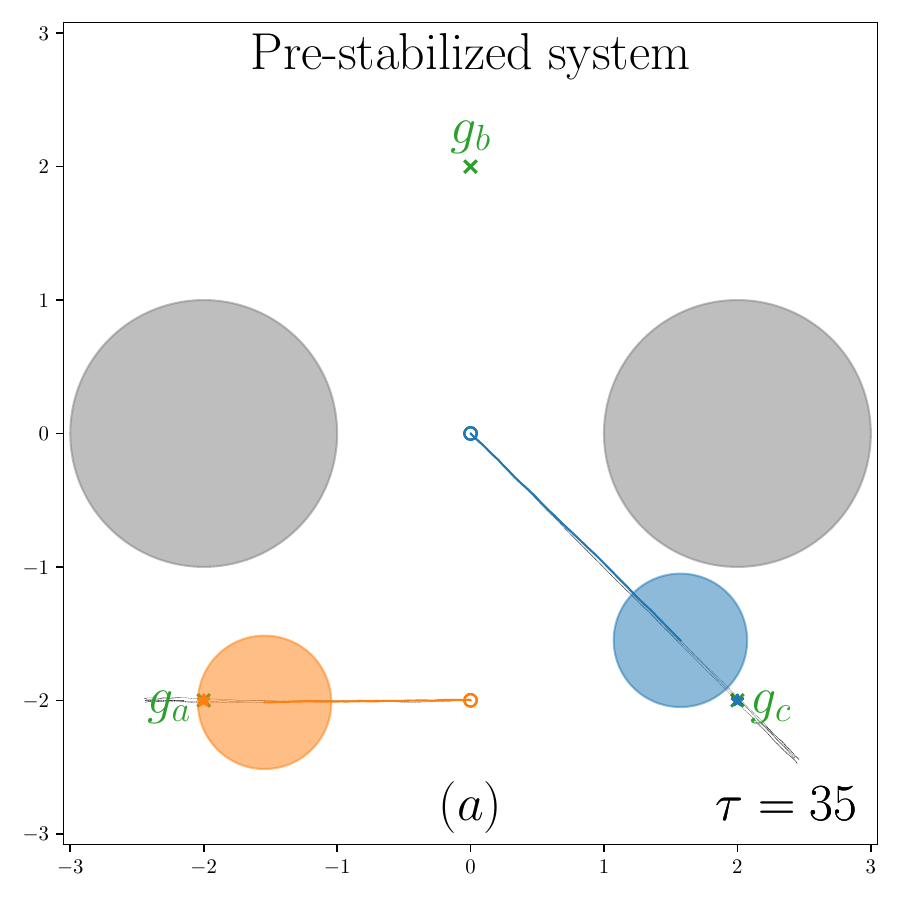}
	\end{minipage}%
	\begin{minipage}{0.32\linewidth}
	    \includegraphics[width=\linewidth]{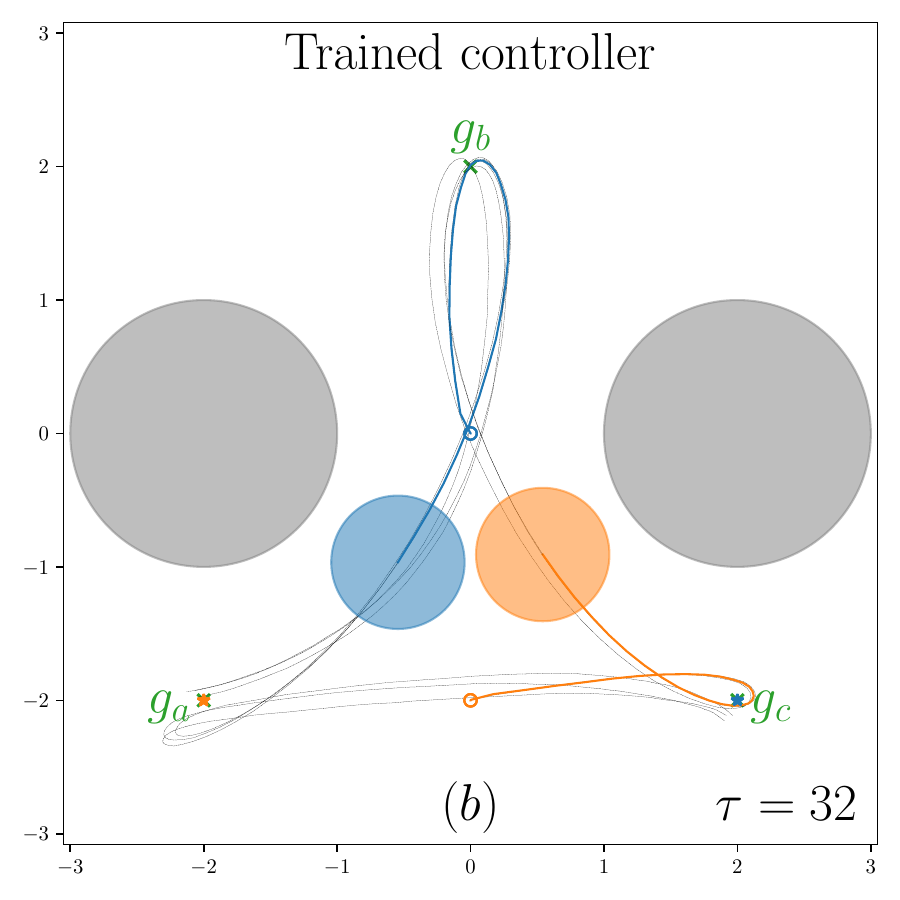}
	\end{minipage}%
	\begin{minipage}{0.32\linewidth}
	    \includegraphics[width=\linewidth]{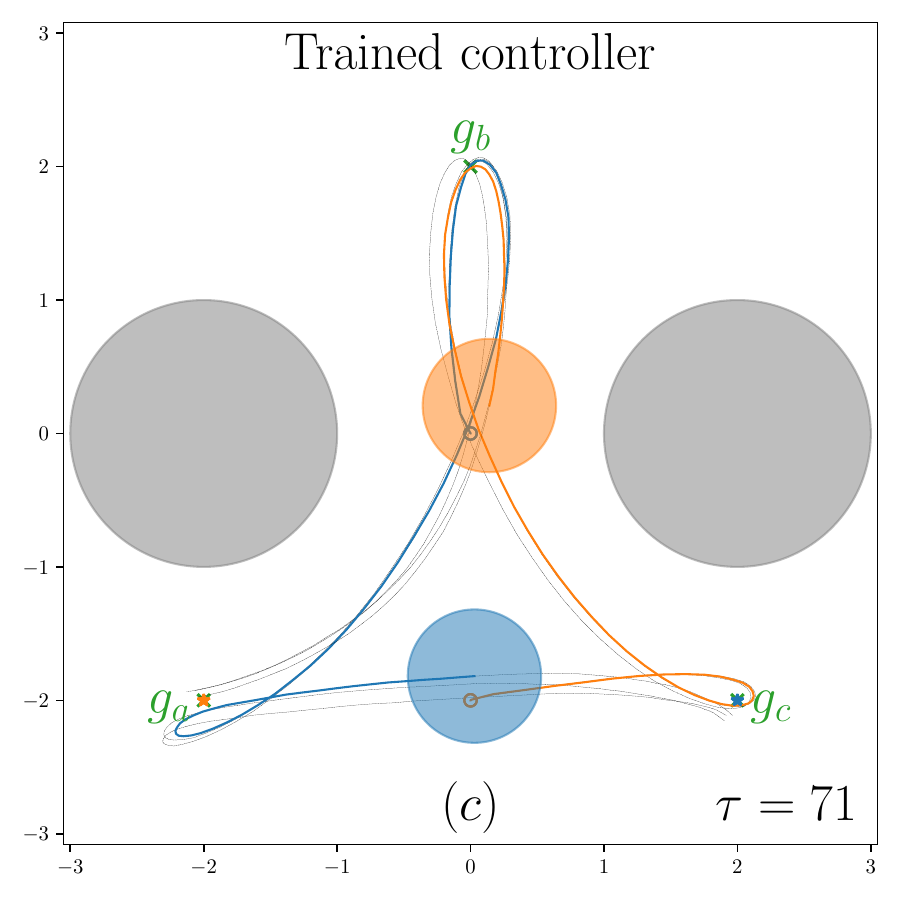}
	\end{minipage}
	\caption{\texttt{Waypoint-tracking} --- Closed-loop trajectories before training (left) and after training (middle and right). Snapshots taken at time-instants $\tau$. Colored (gray) lines show the trajectories in $[0,\tau_i]$ ($[\tau_i,\infty)$). Colored balls (and their radius) represent the agents (and their size for collision avoidance).} 
	\label{fig:waypoints}
\end{figure}

\section{Conclusion}

Embedding safety and stability emerges as a crucial challenge when control systems are equipped with high-performance machine learning components. This work aims to contribute to this rapidly developing field by uncovering the theoretical and computational potential of IMC for safely boosting the performance of closed-loop nonlinear systems with machine learning models such as DNNs. 

The results of this work open up several future research directions. \textcolor{black}{First, motivated by the recent results of \cite{boroujeni2024pac}, it would be relevant to apply statistical learning theory to rigorously assess the generalization capabilities of performance-boosting controllers in uncertain environments, with uncertain models, and over extended time frames.} Second, drawing on insights from \cite{bertsekas2022lessons}, integrating extensive RL-based offline learning with real-time adjustments similar to MPC presents a promising approach. Third, within the IMC framework, \textcolor{black}{there is a significant opportunity to develop richer parametrizations of stable dynamical systems in $\mathcal{L}_p$, and to theoretically prove their approximation capabilities}. Lastly, building upon \cite{martin2024learning}, it is interesting to explore how learning-based IMC methods could generate new optimization algorithms with formal guarantees for tackling complex optimal control and machine learning tasks.

\appendix
\subsection{Implementation details for the numerical experiments in Section \ref{sec:numerical}}
\label{app:implementation}

We set $m^{[i]}=b^{[i]}_1={k'}^{[i]}_1={k'}^{[i]}_2=1$ and $b^i_2=0.5$ as the parameters for each vehicle $i$, in the model~\eqref{eq:mechanical_system} with the pre-stabilizing controller~\eqref{eq:pre_controller}. The collision-avoidance radius of each agent is 0.5.

\subsubsection{Mountains scenario}
As shown in Figure~\ref{fig:corridor}, the vehicles start at $p^{[1]}_{0} = (-2,-2)$ and $p^{[2]}_{0} = (-2,2)$, and their goal is to go to the target positions $\bar{p}^{[1]} = (2,2)$ and $\bar{p}^{[2]} = (-2,2)$, respectively.
The training data consists of $100$ initial positions sampled from a Gaussian distribution around the initial position with a standard deviation of $0.5$.

Let $\bar{x} = (\bar{x}^{[1]}, \bar{x}^{[2]})$ with $\bar{x}^{[i]} = (\bar{p}^{[i]}, 0_2)$.
The terms of the cost function~\eqref{eq:loss_CA} are defined as follows:
\begin{align*}
    l_{traj}(x_t,u_t) = (x_t-\bar{x})^\top \tilde{Q} (x_t-\bar{x}) + \alpha_u u_t^\top u_t
\end{align*}
\begin{equation*}
l_{ca}(x_t) =
\begin{cases}
\alpha_{ca}
\sum_{i=0}^{N}
\sum_{j,\,i\ne j} (d^{i,j}_{t} + \epsilon)^{-2} 
&
\text{if} \, d^{i,j}_{t} \leq D_{\text{safe}}\,, \\ 
0 
&
\text{otherwise}\,,
\end{cases}
\end{equation*}
where 
$\tilde{Q}\succ 0$ and $\alpha_u,\alpha_{ca}>0$ are hyperparameters,
$d^{i,j}_{t} = |p^{[i]}_t-p^{[j]}_t|_2 \geq 0$ denotes the distance between agent $i$ and $j$, $\epsilon>0$ is a fixed positive small constant such that the loss remains bounded for all distance values and $D_{\text{safe}}$ is a safe distance between the center of mass of each the agent; we set it to 1.2.

\allowdisplaybreaks

Motivated by \cite{onken2021neural}, we represent the obstacles based on a Gaussian density function 
\begin{equation*}
\eta(z; \mu, \Sigma) = \frac{1}{2\pi \sqrt{\text{det}(\Sigma)}} \exp\left( -\frac{1}{2} \left(z-\mu\right)^\top \Sigma^{-1} \left(z-\mu\right) \right)\,,
\end{equation*}
with mean $\mu \in \mathbb{R}^2$ and covariance $\Sigma \in \mathbb{R}^{2\times 2}$ with $\Sigma\succ 0$.
The term $l_{obs}(x_t)$ is given by
\begin{align}
l_{obs}(x_t) 
=
\alpha_{obs} 
\sum_{i=0}^{2}
&\Bigg(
\eta\left(p^{[i]}_t; \begin{bmatrix}2.5\\0\end{bmatrix}, 0.2\,I\right) \nonumber\\
&~~+
\eta\left(p^{[i]}_t; \begin{bmatrix}-2.5\\0\end{bmatrix}, 0.2\,I\right)  \nonumber\\
&~~+
\eta\left(p^{[i]}_t; \begin{bmatrix}1.5\\0\end{bmatrix}, 0.2\,I\right) \nonumber\\
&~~+
\eta\left(p^{[i]}_t; \begin{bmatrix}-1.5\\0\end{bmatrix}, 0.2\,I\right)
\Bigg) \,.
\label{eq:l_obs}
\end{align}

For the hyperparameters, we set 
$\alpha_u = 2.5\times 10^{-4}$, 
$\alpha_{ca} = 100$, 
$\alpha_{obs} = 5 \times 10^{3}$ and 
$Q=I_4$.
We use stochastic gradient descent with Adam to minimize the loss function, setting a learning rate of $1\times 10^{-4}$.
We train for $5\times 10^{3}$ epochs with one trajectory per batch size.

\subsubsection{Waypoint-tracking scenario}
\label{subsec:ap:implementation_wp}

As shown in Figure~\ref{fig:corridor}, the vehicles start at $p^{[1]}_{0} = (-2,0)$ and $p^{[2]}_{0} = (0,0)$.
The goal points $g_a$, $g_b$ and $g_c$ are located at $(-2,-2)$, $(0,2)$ and $(2,-2)$, respectively.
To describe the TLTL loss, let us define, for each vehicle, the following functions of time:
\begin{itemize}
    \item $d^{g_i}_t$, for $i=1,2,3$, is the distance between the vehicle and the goal point $g_i$;
    \item $d^{o_i}_t$, for $i=1,2$, is the distance between the vehicle and the $i^{\text{th}}$ obstacle;
    \item $d^{coll}_t$ is the distance between the two vehicles;
\end{itemize}
where $g_1$, $g_2$ and $g_3$ are the waypoints in the correct visiting order, for each vehicle.
Following the notation of~\cite{li2017reinforcement}, 
the  temporal logic form of the cost function, for each vehicle, is
\begin{multline}
    \label{eq:TLTL_cost}
    \left(\psi_{g_1} \,\mathcal{T}\,\psi_{g_2} \,\mathcal{T}\,\psi_{g_3} \right)
    \wedge 
    \left(\lnot\left(\psi_{g_2} \vee \psi_{g_3}\right) \,\mathcal{U}\, \psi_{g_1}\right) 
    \wedge 
    \left(\lnot\psi_{g_3}\,\mathcal{U}\,\psi_{g_2}\right) 
    \\
    \wedge 
    \left(\bigwedge_{i=1,2,3}\square\left(\psi_{g_i}\Rightarrow\bigcirc\square\lnot\psi_{g_i}\right)\right) 
    \wedge 
    \left(\bigwedge_{i=1,2}\square\psi_{o_i}\right) 
    \\
    \wedge
    \square\psi_{coll}
    \wedge
    \lozenge\square\psi_{g_3} 
\end{multline}
where 
$\psi$ are predicates defined in Table~\ref{tab:predicatesTLTL}, and $r_{obs} = 1.7$ and $r_{r} = 0.5$ are the radii of the obstacles and vehicles, respectively.%
\footnote{Note that in the waypoint-tracking scenario, we do not model the obstacles with a Gaussian density function.}
The Boolean operators $\lnot$, $\vee$, and $\wedge$ stand for negation (not), disjunction (or), and conjunction (and). The temporal operators $\mathcal{T}$, $\mathcal{U}$, $\lozenge$, and $\square$  stand for `then', `until', `eventually', and `always'. Mathematically, each term can be automatically translated following~\cite{li2017reinforcement,leung2023backpropagation}.
For instance,
$\square\psi_{coll}$ translates into 
\begin{equation*}
    \min_{t\in[0,T]}  (d^{rob}_t-2r_{rob}) ,
\end{equation*}
and $\square\left(\psi_{g_i}\Rightarrow\bigcirc\square\lnot\psi_{g_i}\right)$ translates into
\begin{align*}
    \min_{t\in[0,T]} \max \big( 
    \begin{aligned}[t]
        & -(0.05-d^{g_i}_t) \,,\, 
        &\hspace{-2pt}\min_{\tilde{t}\in[t+1,T]} -(0.05-d^{g_i}_t) \big) .
    \end{aligned}
\end{align*}
The full mathematical expression of~\eqref{eq:TLTL_cost}, which can be obtained following~\cite{li2017reinforcement}, is implemented
in our \href{https://github.com/DecodEPFL/performance-boosting_controllers.git}{Github repository}.

\begin{table}[!bt]
    \centering
    \begin{tabular}{c|c}
        Predicates & Expression \\
        \hline
        $\psi_{g_1}$ & $d^{g_1} < 0.05$ \\
        $\psi_{g_2}$ & $d^{g_2} < 0.05$ \\
        $\psi_{g_3}$ & $d^{g_3} < 0.05$ \\
        $\psi_{o_1}$ & $d^{o_1} > r_{obs}$ \\
        $\psi_{o_2}$ & $d^{o_2} > r_{obs}$ \\
        $\psi_{coll}$ & $d^{rob} > 2\,r_{rob}$
    \end{tabular}
    \caption{Predicates used in the TLTL formulation of \eqref{eq:TLTL_cost}.}
    \label{tab:predicatesTLTL}
\end{table}

We also add a small regularization term for promoting that the vehicles stay close to the end target point, which reads $\alpha_{\text{reg}}\norm{x_t-\bar{x}}^2$, with $\alpha_{\text{reg}}=1\times10^{-4}$.
We use stochastic gradient descent with Adam to minimize the loss function, setting a learning rate of $5\times10^{-4}$.
We train for 3000 epochs with a single trajectory per batch size.

\subsection{Comparison of performance-boosting controllers with other baselines}
\label{app:comparison}


We compare the performance of our proposed controllers with two baseline approaches for the scenario \texttt{mountains} presented in Section~\ref{sec:numerical}.\ref{sec:example_stab}.
In both cases, the vehicles are equipped with the base proportional controller \eqref{eq:pre_controller} which is able to steer the agents towards the target position in a stable way.
As described in Section~\ref{sec:numerical}.\ref{sec:example_stab}, improving the performance means vehicles must avoid collisions with each other and with obstacles.

 The first baseline we consider is a control policy  derived by solving an optimization problem in a receding-horizon manner. This optimization problem is defined over the set of control inputs that ensure collision avoidance within the horizon. 

The second baseline is to directly parametrize the entire control policy $\mathbf{u} = \mathbf{K}(x)$ as a recurrent neural network, that is, without adopting the IMC architecture of Figure~\ref{fig:IMCscheme} train a control policy $\mathbf{u}= \mathbf{K}(\mathbf{x})$ directly parametrized as a recurrent neural network optimizing the cost $L(x_{T:0},u_{T:0})$ defined in Section~\ref{sec:numerical}.\ref{sec:example_stab}.
Note that this approach does not guarantee the stability of the resulting closed-loop system.

\subsubsection{Online-optimization using barrier functions over the base controller}

A common approach in robotics for avoiding collisions and unsafe regions is to use control barrier functions~\cite{ames2019control,agrawal2017discrete}.


This requires online optimization for computing the system inputs. Specifically, we consider the approach in~\cite{agrawal2017discrete} for guaranteeing that the safe region is forward invariant.
The online optimization problem reads as
\begin{subequations}
\label{eq:online_opt}
\begin{align}
u_t^* = \arg~&\min_{u_t,u_{t+1}}   u_t^\top u_t \\
\text{s.t.} ~~& x_{t} = (p_t^{[1]},q_t^{[1]},p_t^{[2]},q_t^{[2]})\,,\\
&u_t = (u_t^{[1]},u_t^{[2]})\,,~\eqref{eq:mechanical_system},\eqref{eq:pre_controller},\eqref{eq:control_input_additional}\,, \\
&h(x_{t+1}) - h(x_t) + \gamma\, h(x_t) \geq 0\,,\\
&h(x_{t+2}) - h(x_{t+1}) + \gamma\, h(x_{t+1}) \geq 0\,,
\label{eq:online_opt_constraint}
\end{align}
\end{subequations}
where $0 <\gamma\leq 1$ and 
$u_t^*$ is the safety-preserving input to the system.
The barrier function $h: \mathbb{R}^n\rightarrow \mathbb{R}$ characterizes the region $\mathcal{C} = \{x \in \mathbb{R}^n : h(x)\geq0\}$ in the state space 
where no collisions between agents nor with the obstacles occur. 
To this purpose, we define
\begin{multline*}
    h(x) = \left(|p^{[1]}-p^{[2]}|^2 - 4 \, r_{\text{agent}}^2 \right)\\
    + \sum_{i=1}^2 \sum_{j=1}^4 \left(|p^{[i]}-p^{\text{obs}_j}|^2 - r_{\text{obs}}^2 \right)\,,
\end{multline*}
which is positive in the safe region. 
The radius of each agent is $r_{\text{agent}}=0.5$, while $r_{\text{obs}}=1.4$ denotes the radius of two obstacles, modeled as disks. The center of each disk is given by $p^{\text{obs}_j}\in\mathbb{R}^2$ and is the mean of the Gaussian density functions used for defining $l_{obs}$ in \eqref{eq:l_obs}. 
In the absence of collisions at time $t$, the constraint \eqref{eq:online_opt_constraint} forces the agents to stay in the safe region at time $t+1$ as well.

Figure~\ref{fig:online_optimization} shows three closed-loop trajectories of the agents when starting from different initial conditions. 
When the initial positions are symmetric with respect to the $y$-axis (Figure~\ref{fig:online_optimization}(a)), the optimization problem~\eqref{eq:online_opt} cannot find an input $u_t^*$ 
allowing both agents to pass through the narrow corridor, and the agents stop without reaching the target.
This is due to the reactive nature of CBFs, which do not account for the behavior of the system nor prioritize target reaching.
To provide an objective performance assessment,  we compare the quadratic cost term on the state, i.e. we evaluate the Euclidean distance to the target using
\begin{equation}
    \label{eq:loss_comparison}
    \tilde{L}(x_{T:0},u_{T:0}) = \sum_{t=0}^T (x_t-\bar{x})^\top (x_t-\bar{x})\,,
\end{equation}
over 20 test initial conditions
(sample trajectories are displayed in Figure~\ref{fig:online_optimization}(b-c)).
The average cost incurred by the control law is $25.81$, while it is $20.94$ when using our approach. We highlight that when the vehicles are close enough to their respective target positions, one has $u_t^\star = 0$, and the system inherits the stability properties due to the base controller.

\begin{figure}
	\centering
	\begin{minipage}{0.32\linewidth}
	    \includegraphics[width=\linewidth]{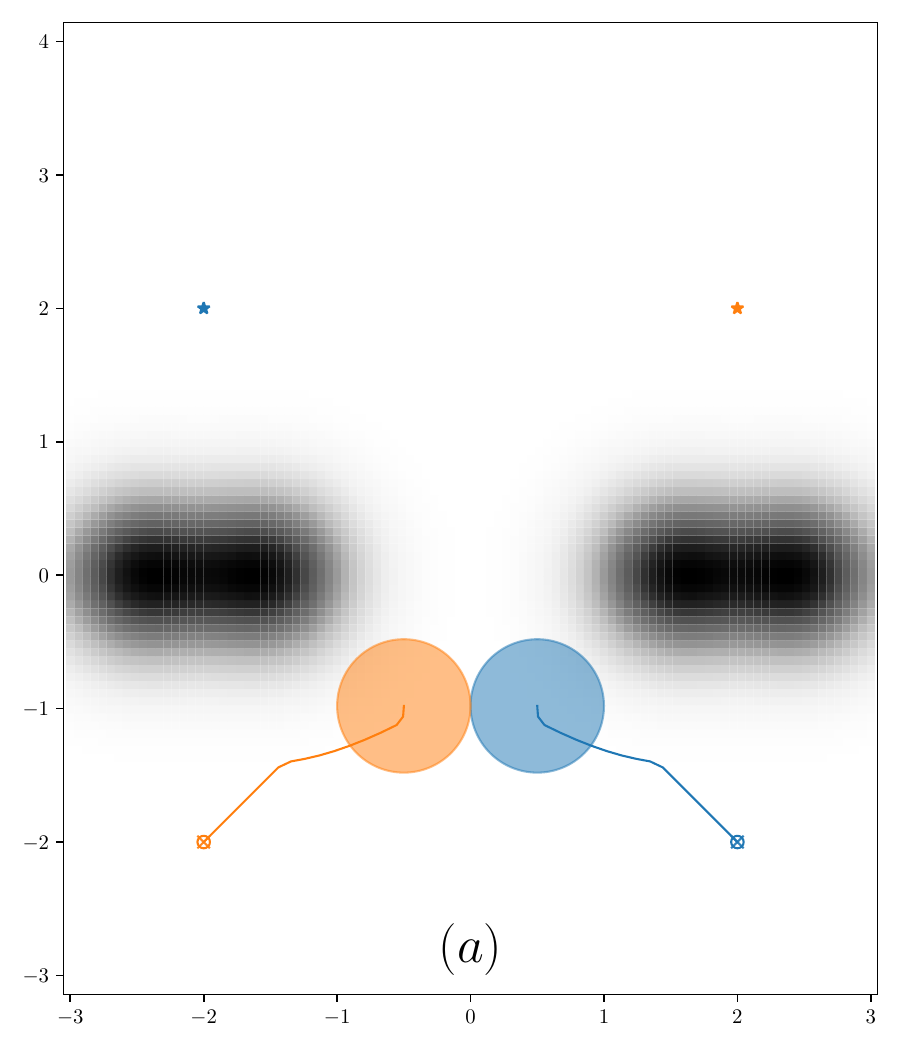}
	\end{minipage}%
	\begin{minipage}{0.32\linewidth}
	    \includegraphics[width=\linewidth]{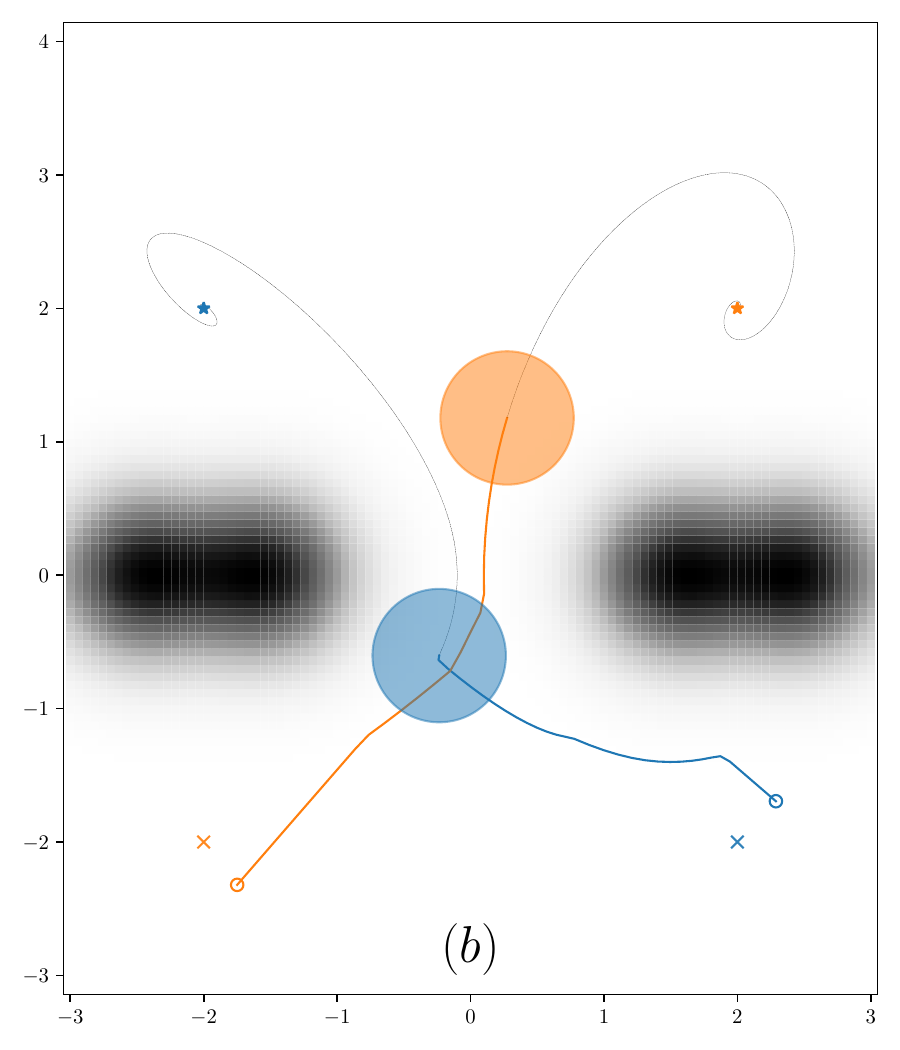}
	\end{minipage}%
	\begin{minipage}{0.32\linewidth}
	    \includegraphics[width=\linewidth]{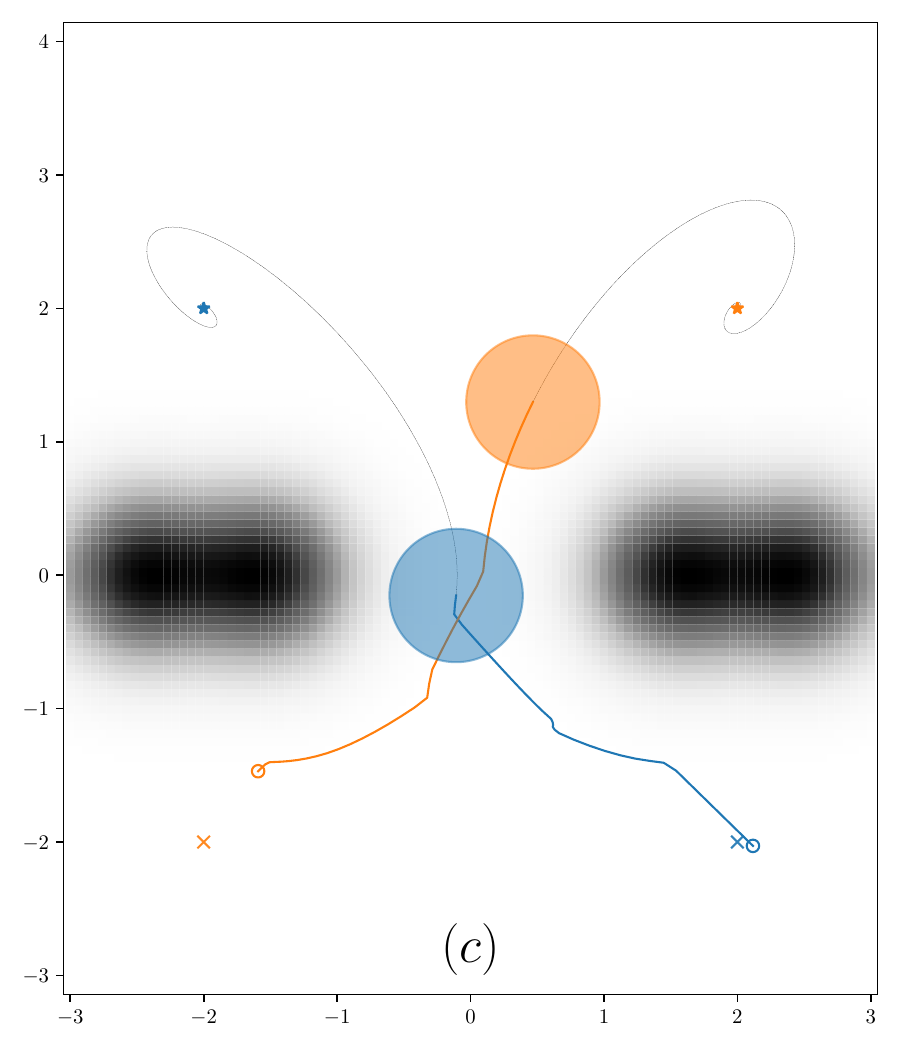}
	\end{minipage}
	\caption{\texttt{Mountains} --- Closed-loop trajectories when using the online policy given by \eqref{eq:online_opt}.
     Snapshots of three trajectories starting at different test initial conditions.} 
	\label{fig:online_optimization}
\end{figure}

\subsubsection{A recurrent neural network controller}

We replace the controller in Figure~\ref{fig:IMCscheme} by a REN where the trainable parameters are the weights $W$ and the time-invariant bias $b_t = b$ in \eqref{eq:RENs}.
Note that we do not constrain the REN to be an $\mathcal{L}_2$  operator, i.e., we do not use the mapping $\Theta$ described in Section~\ref{sec:implementation}.\ref{subsec:RENs} for redefining the trainable parameters.
The model consists of 861 parameters which are optimized for minimizing the cost $L(x_{T:0},u_{T:0})$, using the same initial conditions as in the experiments of Section~\ref{sec:numerical}.\ref{sec:example_stab}.
Figure~\ref{fig:RNN} shows three closed-loop trajectories of the agents when starting from different initial positions. Note that the targets are no longer the equilibria of the closed-loop system, and the vehicles move away from the targets after an initial reaching phase.
The cost \eqref{eq:loss_comparison} incurred by this control law is $26.60$, while it is $20.94$
when using 
a performance-boosting controller
(where the REN representing the operator $\Emme$ has 864 parameters, i.e., only three more than the above REN controller).

\begin{figure}
	\centering
	\begin{minipage}{0.32\linewidth}
	    \includegraphics[width=\linewidth]{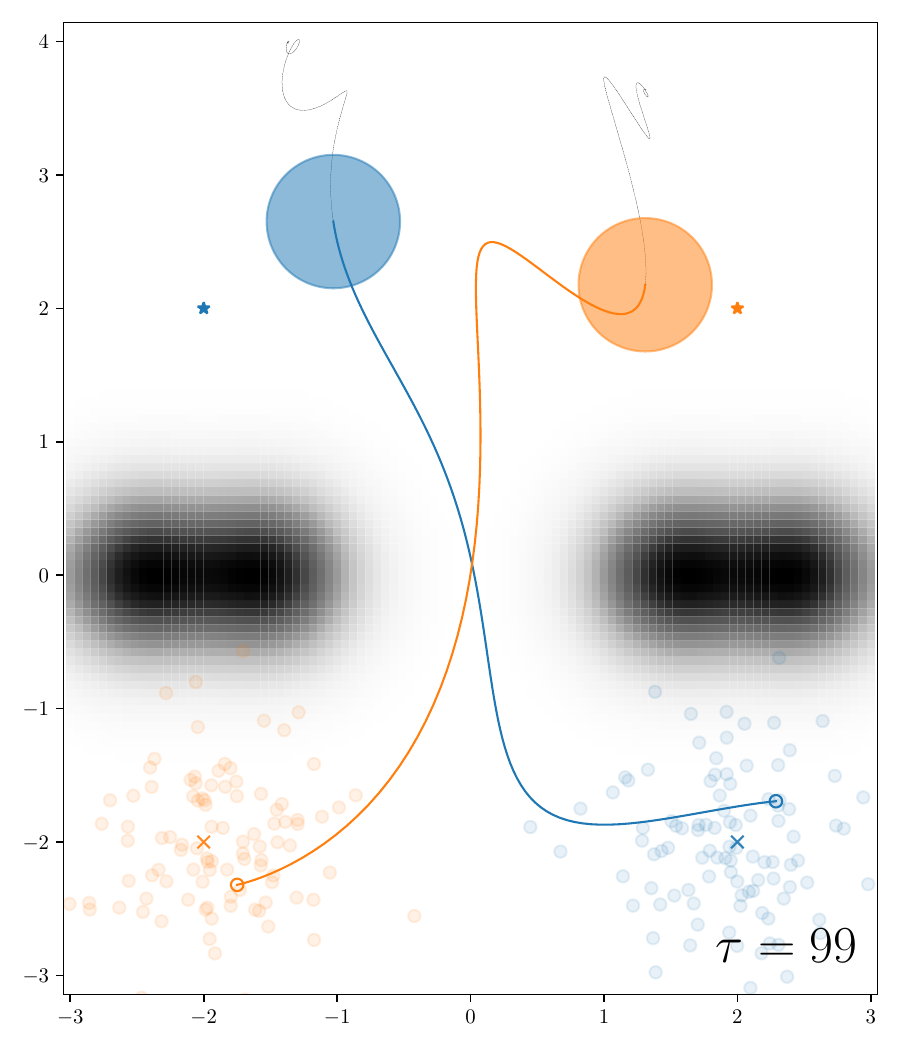}
	\end{minipage}%
	\begin{minipage}{0.32\linewidth}
	    \includegraphics[width=\linewidth]{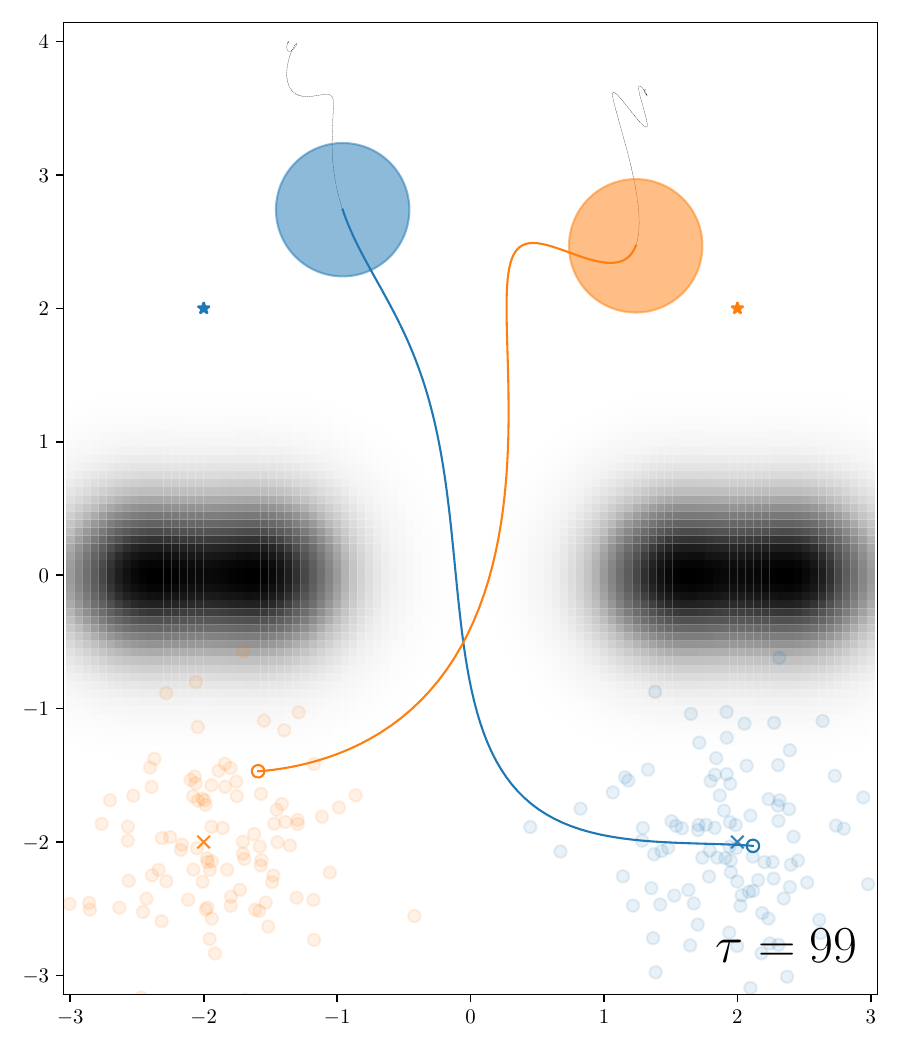}
	\end{minipage}%
	\begin{minipage}{0.32\linewidth}
	    \includegraphics[width=\linewidth]{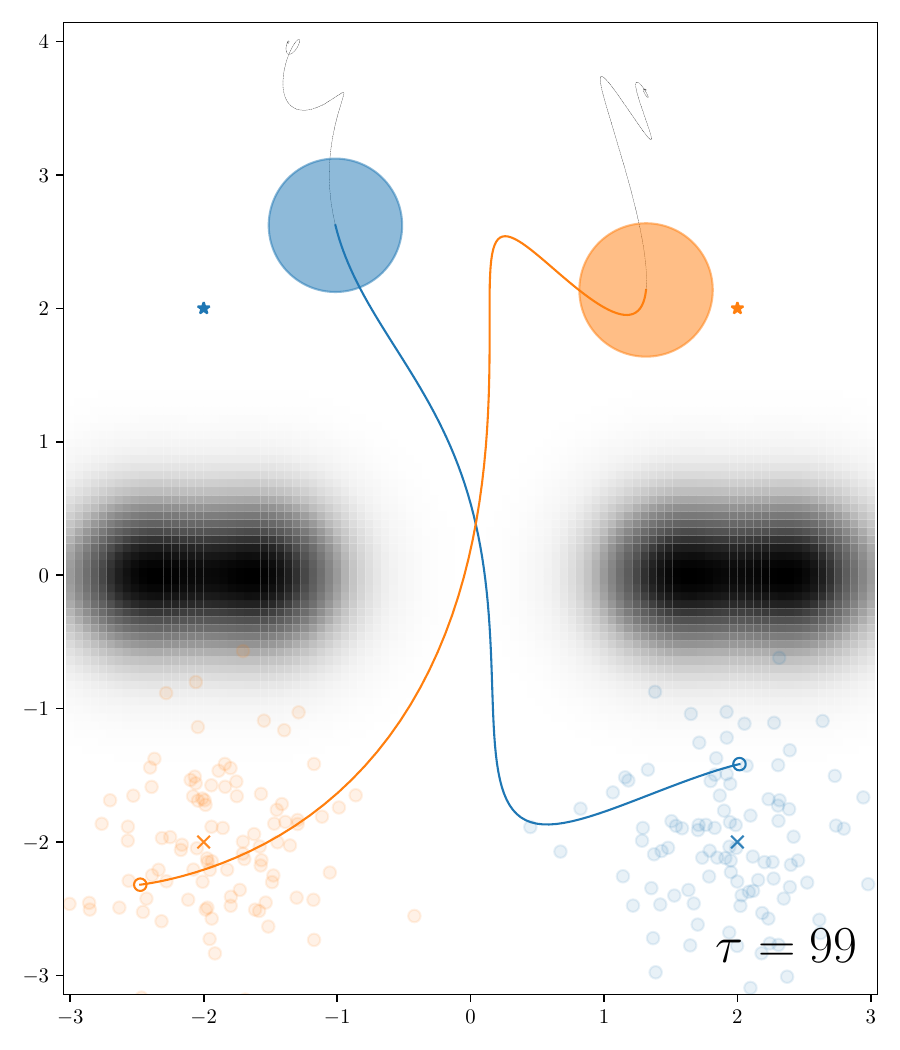}
	\end{minipage}
	\caption{\texttt{Mountains} --- Three different closed-loop trajectories after training a REN controller without $\mathcal{L}_2$ stability guarantees over 100 randomly sampled initial conditions marked with $\circ$. Colored (gray) lines show the trajectories in (after) the training time interval.} 
	\label{fig:RNN}
\end{figure}





\bibliographystyle{IEEEtran}
\bibliography{references.bib}

\end{document}